\documentclass[onecolumn,12pt,draftclsnofoot]{IEEEtran}

\usepackage{hyperref}

\usepackage[utf8]{inputenc}
\usepackage[T1]{fontenc}
\usepackage{cite}
\usepackage{url}
\usepackage[cmex10]{amsmath}
\usepackage{amssymb}
\usepackage{bbm}
\usepackage{mathrsfs}
\usepackage{graphicx}
\usepackage{caption}
\usepackage{subcaption}
\usepackage{algorithm,algpseudocode,tabularx,diagbox}
\usepackage{tikz}
\usetikzlibrary{shapes.geometric, arrows.meta, positioning, calc}

\usepackage{amsthm}
\newtheorem{theorem}{Theorem}[section]
\newtheorem{lemma}[theorem]{Lemma}
\newtheorem{definition}[theorem]{Definition}
\newtheorem{remark}[theorem]{Remark}

\title{Covert Bayesian Quickest Change Detection}

\author{
Yun-Feng Lo and Matthieu R. Bloch
\thanks{Y.-F. Lo and M. R. Bloch are with the School of Electrical and Computer Engineering, Georgia Institute of Technology, Atlanta, GA 30332 USA. Email: yun-feng.lo@gatech.edu, matthieu.bloch@ece.gatech.edu.}
}

\begin{document}
\maketitle

\begin{abstract}
We investigate the problem of covert quickest change detection in a Bayesian and infinite-horizon setting. A legitimate entity seeks to detect a change in the state of a discrete memoryless channel as quickly as possible by actively probing it. Simultaneously, the entity must ensure its probing remains covert from an adversary monitoring the channel for active sensing. We introduce the expected covertness budget (ECB) as an analytically tractable covertness metric that bounds from above the relative entropy between the observation sequences induced by active and passive sensing. Under constraints on both the probability of false alarm (PFA) and the ECB, we establish a second-order asymptotic converse bound on the average detection delay as the PFA constraint approaches zero, for any positive ECB constraint, explicitly quantifying the maximum square-root-order covert sensing gain possible. Furthermore, we propose an achievability scheme utilizing a constant-sensing-probability Shiryaev-type policy and show that it matches the second-order asymptotic converse. We illustrate our result with a numerical example.
\end{abstract}

\section{Introduction}
\label{sec:intro}
The study of the estimation-theoretic and information-theoretic limits of information hiding has found many applications, from steganography~\cite{Moulin2003, Ker2007, Wang2008} to covert communications~\cite{Bash2013, Bloch2015b, Wang2016b}, covert sensing in classical~\cite{Goeckel2017, Tahmasbi2020b, Wang2024Covert} and quantum~\cite{Bash2017, Gagatsos2019, Tahmasbi2020c, Tahmasbi2021, Hao2022Demonstration} systems, and to understanding the ability to perform stealthy attacks in cyber-physical systems~\cite{Bai2014, Bai2015}.

Particularly relevant to the present work is the literature on covert sensing, in which legitimate parties attempt to sense a system and estimate a parameter without triggering detection by an adversary monitoring whether the system is being probed. Much of the literature on covert sensing has focused on the problem of covert state estimation~\cite{Goeckel2017, Bash2017, Gagatsos2019, Hao2022Demonstration} and covert hypothesis testing~\cite{Tahmasbi2020b, Tahmasbi2020c, Tahmasbi2021,chang_covert_2021}. Therein, the objective is to characterize how a covert constraint affects the optimal mean-squared error or probability of discrimination error, and how these quantities scale with the number of observations.

We are interested here in studying the distinct problem of covert quickest change detection, in which the objective is for a terminal to probe a system that will change from one state to another at an a priori unknown time and minimize the average change detection delay (see, e.g.,~\cite{poor_quickest_2008,veeravalli_quickest_2014,tartakovsky_sequential_2014,xie2021sequential}, for introduction, exposition, and survey on the research field of quickest change detection). The problem is related to quickest change detection with controlled sensing~\cite{Veeravalli_2024}, with the additional requirement that probing remains covert from an adversary. 

There is related literature on covert quickest change detection, albeit with distinct models and objectives that result in rather different results. In~\cite{Huang_2020, Huang_2021}, the authors study a covert communication problem in which an adversary attempts to detect the transmission by running a quickest change detection algorithm. In~\cite{Ramtin_2024, Ramtin_2025}, the authors introduce the problem of quickest change detection in the presence of a covert adversary that attempts to do maximum damage by controlling the changepoint and post-change distribution, where maximum damage means making average detection delay on the order of the average time to false alarm. The present work differs in that a legitimate party attempts to perform quickest change detection while avoiding detection by an adversary monitoring the presence of probing signals.

It is worth noting that, although the policy optimization aspect of the covert Bayesian quickest change detection problem we define is somewhat similar to the problem setup investigated in~\cite{premkumar_optimal_2008, banerjee_bayesian_2011, geng_bayesian_2014}, in the sense that all can be formulated as partially observable Markov decision processes (POMDPs), in our work the use of a policy with vanishing probability of sensing, imposed by the covertness constraint, presents unique challenges.

Our problem is also related to that of~\cite{rao_extensible_2022}, both of which aim to ensure covertness in a sequential, infinite-horizon setting. It is interesting to observe that while the authors of~\cite{rao_extensible_2022} propose the use of a power sequence that strictly decays over time, our proposed policy uses a constant-over-time sensing probability, the value of which decays with the absolute log probability of false alarm, which serves as an ``effective blocklength" for the covert sensing problem.

An intuitive explanation of our main result is as follows. While the covertness constraint prevents active probing from improving the first-order scaling of the average detection delay, cautious probing still yields a second-order gain. In particular, the gain is of square-root order in the absolute logarithm of the false-alarm probability, reflecting a square-root-law phenomenon analogous to that in covert communication.

Our main contributions are summarized as follows.
First, we formulate covert Bayesian quickest change detection as a sequential active sensing problem constrained by both Alice's probability of false alarm and Eve's ability to detect probing. We introduce the expected covertness budget (ECB), a tractable surrogate that upper bounds Eve's relative-entropy-based distinguishability and remains amenable to policy design.

Second, we prove a non-asymptotic information bound that separates passive information from the geometric prior and active information from post-change sensing. Together with the ECB constraint, this yields a second-order converse: covert probing cannot improve the first-order average-delay scaling, but it can provide a square-root-order gain.

Third, we construct a constant-sensing-probability Shiryaev-type policy that satisfies the ECB constraint and matches the converse up to second order. Numerical experiments illustrate that the resulting covert probing gains are already visible at moderate false-alarm levels.

The remainder of the paper is organized as follows. In Sec.~\ref{sec:notation}, we collect our notation. In Sec.~\ref{sec:sys_model}, we define the problem of covert Bayesian quickest change detection. In Sec.~\ref{sec:covert_constr_analys}, we analyze the covertness constraint. In Sec.~\ref{sec:opt_prob}, we define Alice's active policy as an optimization problem. In Sec.~\ref{sec:sec-ord-conv}, we prove the second-order converse, while in Sec.~\ref{sec:achievability} we construct an achievability scheme matching that second-order converse. Finally, in Sec.~\ref{sec:numer_resul}, we present numerical results corroborating our findings.

\section{Notation}
\label{sec:notation}
We collect here the notation used throughout this paper.

We use $\triangleq$ to signify that the equality is a definition or follows directly from a definition.
Let $\mathbb{N}\triangleq\{1,2,\cdots\}$ denote the set of strictly positive integers.
Let $\emptyset$ denote the empty set or empty sequence, which will be clear from the context.
Let $\mathbb{R}$ denote the set of real numbers. For $x\in\mathbb{R}$, let $(x)^+\triangleq\max\{x,0\}$. For $a,b\in\mathbb{R}$ with $a<b$, let $[a,b]\triangleq\{r\in\mathbb{R}:a\le r\le b\}$, $]a,b[~\triangleq\{r\in\mathbb{R}:a< r< b\}$ and $]a,b]~\triangleq\{r\in\mathbb{R}:a< r\le b\}$. Let $\inf$ denote infimum, with the convention $\inf\emptyset\triangleq\infty$.
For any discrete set $\mathcal{S}$, let $\Delta_{\mathcal{S}}$ denote the probability simplex over $\mathcal{S}$, i.e., the set of probability mass functions (PMFs) over the alphabet $\mathcal{S}$. 
For any two sets $\mathcal{S}_1$ and $\mathcal{S}_2$, let $\mathcal{S}_1\times\mathcal{S}_2$ denote their Cartesian product. For any $n\in\mathbb{N}$ and for any set $\mathcal{S}$, we denote by $\mathcal{S}^n$ the $n$-fold Cartesian product of $\mathcal{S}$ with itself.
When a random variable (r.v.) $X$ follows a distribution $P$, we write $X\sim P$. In particular, $\textnormal{Ber}(\beta)$ stands for the Bernoulli distribution with success probability $\beta\in[0,1],$ and $\textnormal{Geo}(\rho)$ denotes the geometric distribution with parameter $\rho\in~]0,1[~$. 
For a sequence of r.v.s $(X_n)_{n\in\mathbb{N}}=(X_1,X_2,\cdots)$ or its realizations $(x_n)_{n\in\mathbb{N}}=(x_1,x_2,\cdots)$, let $X_n^m\triangleq(X_n,\cdots,X_m)$ if $n,m\in\mathbb{N}$ with $m\ge n$, otherwise $X_n^m=\emptyset$. 
We use the shorthand $X^m \triangleq X^m_1$ for $m\in\mathbb{N}$.
The same rules apply to the realizations.
By convention, a summation $\sum_{i=n}^m(\cdot)\triangleq0$ whenever $n>m$.
Let $\mathbb{I}\{\cdot\}$ denote the indicator function, and $\ln$ denotes natural logarithm.
For two PMFs $P$ and $Q$, we write $P\ll Q$ to denote that $P$ is absolutely continuous w.r.t. $Q$. For two PMFs $P$ and $Q$ with $P\ll Q$, let $\mathbb{D}(P\|Q)\triangleq\mathbb{E}_{X\sim P}\left[\ln\frac{P(X)}{Q(X)}\right]$, the relative entropy from $P$ to $Q$, and $\mathbb{V}(P\|Q)\triangleq\mathbb{E}_{X\sim P}\left[\left(\ln\frac{P(X)}{Q(X)}\right)^2\right]$. For $n\in\mathbb{N}$, define $\chi_{n+1}(P\|Q)\triangleq\sum_{z\in\mathcal{Z}} (P(z)-Q(z))^{n+1}/Q(z)^{n}$ for any PMFs $P,~Q$  with $P\ll Q$ on a discrete support $\mathcal{Z}$.

\section{System Model}
\label{sec:sys_model}

\begin{figure}[htbp]
    \centering

    \begin{tikzpicture}[
        >=Stealth,
        auto,
        node distance=1cm,
        block/.style={rectangle, draw, thick, minimum width=1.5cm, minimum height=1cm, align=center},
        channel/.style={rectangle, draw, thick, minimum width=1.5cm, minimum height=1cm, align=center},
        delay/.style={rectangle, draw, thick, minimum width=1.5cm, minimum height=1cm, align=center}
    ]
    
        \node[block] (alice) {Alice\\$\Pi^\textnormal{A}$};
        \node[channel, right=1cm of alice] (channel) {Channel\\$W_{Y,Z|X,\theta}$};
        \node[block, right=1cm of channel] (eve) {Eve\\$\mathscr{D}$};
        
        \node[block, above=1cm of channel] (nature) {Nature\\$\Gamma$};
        \draw[->, thick] (nature.south) -- node[right] {$\theta_t$} (channel.north);
        
        \draw[->, thick] (alice.east) -- node[above] {$X_t$} (channel.west);
        \draw[->, thick] (channel.east) -- node[above] {$Z_t$} (eve.west);
        
        \node[delay, below=1cm of channel] (delay) {Delay\\$z^{-1}$};
        
        \draw[->, thick] (channel.south) -- node[right] {$Y_t$} (delay.north);
        \draw[->, thick] (delay.west) -| (alice.south);
    
    \end{tikzpicture}

    \caption{Model for covert Bayesian quickest change detection.}
    \label{fig:system_model}
\end{figure}

We consider the scenario illustrated in Figure~\ref{fig:system_model} in which there is a discrete memoryless channel (DMC) $W_{Y, Z|X,\theta}$ between Alice and Eve, where a third-party we call Nature controls the channel state sequence $(\theta_t)_{t\in\mathbb{N}}$. Alice, the legitimate party, controls the channel input sequence $(X_t)_{t\in\mathbb{N}}$ and receives one of the channel output sequences $(Y_t)_{t\in\mathbb{N}}$ after $1$ unit of time delay (denoted by the delay block labeled $z^{-1}$), while Eve, the adversary/eavesdropper, obtains the other channel output sequence $(Z_t)_{t\in\mathbb{N}}$ without delay. We assume that for all $t\in\mathbb{N}$, $X_t\in\mathcal{X}$, $Y_t\in\mathcal{Y}$ and $Z_t\in\mathcal{Z}$, where $\mathcal{X}$, $\mathcal{Y}$ and $\mathcal{Z}$ are the respective alphabets. For simplicity, we assume all these alphabets are discrete; in particular, $\mathcal{X}\triangleq\{0,1\}$, where $0$ denotes an innocent action and $1$ an active probing action.

We assume that the channel has only two states, denoted by $0$ and $1$, i.e., $\theta\in\Theta\triangleq\{0,1\}$. While, in principle, Nature can vary the state arbitrarily, in this paper, we consider the quickest change detection setup. That is, at some time instant $\Gamma\in\mathbb{N}$ that we call the changepoint, Nature switches the channel state $\theta$ from $0$ to $1$. Thus, the channel state sequence is
\begin{equation}
    \label{eq:channel_state_sequence}
    \theta_t
    =
    \begin{cases}
        0,&t\leq\Gamma,\\
        1,&t>\Gamma.
    \end{cases}
\end{equation}
In terms of indicator functions, we have $\theta_t=\mathbb{I}\{t>\Gamma\}$. We also denote $\theta_t(k)\triangleq\mathbb{I}\{t>k\}$ when the changepoint realization is $k$.
Note that our indexing convention differs by one time unit from the common Bayesian quickest change detection convention used in, e.g.,~\cite{tartakovsky_general_2005}, where the observation at the changepoint is drawn from the post-change distribution. Specifically, if $\lambda$
denotes the changepoint under that convention, then our convention corresponds to $\Gamma = \lambda - 1$. 

We further assume that this changepoint has a prior distribution that is geometric, following the classical Bayesian setting by Shiryaev~\cite{shiryaev1963optimum}. That is, the changepoint r.v. $\Gamma$ has the PMF
\begin{equation}
    \label{eq:prior_pmf}
    \mathbb{P}\{ \Gamma = k\} \triangleq \pi_k,
    ~\forall k\in\mathbb{N},
\end{equation}
where
\begin{equation}
    \label{eq:prior_mass_sequence}
    \pi_k\triangleq\rho(1-\rho)^{k-1}
\end{equation}
for some parameter $\rho\in~]0,1[$. We assume that both Alice and Eve know this prior (including the value of $\rho$), but neither of them knows the realization of the changepoint.
Let $d\triangleq |\ln(1-\rho)|>0$ be the exponent of the prior tail $T_n\triangleq\mathbb{P}_{\pi}\{\Gamma>n\}$, i.e., for $n\in\mathbb{N}$,
\begin{align}
    \label{eq:geo_prior_tail}
    T_n = \sum_{k>n} \pi_k = (1-\rho)^n = \exp(-dn)
    .
\end{align}

Alice aims to deploy a quickest change detection policy $\Pi^\textnormal{A}$ to detect the channel state change as quickly as possible, measured by an average detection delay (ADD) metric we define later. However, to prevent Alice from being too careless in declaring that a change has occurred, another metric also constrains her, the probability of false alarm (PFA), defined below. To gather data to support the hypothesis that the change has happened, Alice designs her probing action sequence $(X_t)_{t\in\mathbb{N}}$ and observes one of the channel output sequences $(Y_t)_{t\in\mathbb{N}}$ with a unit delay. By this we mean that, for all $t\in\mathbb{N}$, before deciding her next probing action $X_t$, Alice gets the delayed observation $Y_{t-1}$ in time to make a generally causal and closed-loop control (which we call causal control hereafter) action $X_{t}=s_t(X^{t-1}, Y^{t-1})$, where $s_t:\mathcal{X}^{t-1}\times\mathcal{Y}^{t-1}\to\Delta_{\mathcal{X}}$ is a random function of past actions $X^{t-1}$ and past observations $Y^{t-1}$. Since $\mathcal{X}\triangleq\{0,1\}$, an equivalent description of this causal control relationship is
\begin{equation}
    \label{eq:causal_control}
    X_t \sim \textnormal{Ber} \left( \beta_t(X^{t-1},Y^{t-1}) \right)
    ,
\end{equation}
where $\beta_t:\mathcal{X}^{t-1}\times\mathcal{Y}^{t-1}\to[0,1]$ is a deterministic function. 

For any $t\in\mathbb{N}$, let $\mathcal{F}_t\triangleq\sigma(X^{t-1},Y^{t-1})$ be the sigma algebra generated by the r.v.s $(X^{t-1},Y^{t-1})$, representing the causal information Alice has at time $t$, where $\mathcal{F}_1$ is the trivial sigma algebra. Note that because of the unit delay, this is different from the conventional definition where $\mathcal{F}_t=\sigma(X^{t}, Y^{t})$. Then the sequence $(\mathcal{F}_t)_{t\in\mathbb{N}}$ is a filtration that we call Alice's natural filtration. With this definition, $(\beta_t)_{t\in\mathbb{N}}$, viewed as a random process with randomness induced by its input arguments, is measurable with respect to (w.r.t.) $\mathcal{F}_{t}$. We call $\boldsymbol{\beta}\triangleq(\beta_t)_{t\in\mathbb{N}}$ the active sensing part of Alice's policy $\Pi^\textnormal{A}$.

Alice also uses a sequential test for the channel parameter change in the form of a stopping time $\tau^\textnormal{A}$, namely, for any $t\in\mathbb{N}$, the event $\{\tau^\textnormal{A}\leq t\}\in\mathcal{F}_t$. That is, for any realization $s$ of the r.v. $\tau^\textnormal{A}$, Alice believes that the change has happened at time $s$, solely based on the causal information available to her at that time, $\mathcal{F}_s$. Equivalently, Alice can deploy a sequence of tests $(d_t)_{t\in\mathbb{N}}$ where each $d_t:\mathcal{X}^{t-1}\times\mathcal{Y}^{t-1}\to\{0,1\}$ is a deterministic function. When $d_t(X^{t-1}, Y^{t-1})=0$, Alice continues running her policy and samples the next probing action $X_t$. When $d_t(X^{t-1}, Y^{t-1})=1$, Alice stops and believes that the change has happened. In this case, she keeps sending the innocent action $0$ onwards. We assume that $d_1\equiv0$, meaning that Alice does not stop without any gathered data.
The equivalence of the stopping time $\tau^\textnormal{A}$ and the sequence of tests $(d_t)_{t\in\mathbb{N}}$ can be seen via the following relation:
\begin{equation}
    \label{eq:stop_time_equiv}
    \tau^\textnormal{A} = \inf\{ t \in \mathbb{N}: d_t(X^{t-1},Y^{t-1})=1 \}.
\end{equation}
We refer to $\tau^\textnormal{A}$ as the sequential test part of Alice's policy $\Pi^\textnormal{A}$.

To sum up, Alice's policy can be represented as the tuple $\Pi^\textnormal{A}=(\tau^\textnormal{A},\boldsymbol{\beta})$. In particular, at time $t\in\mathbb{N}$, Alice first gathers the delayed observation feedback $Y_{t-1}$, and then decides whether to stop based on $\tau^\textnormal{A}$. If she decides to continue, she samples the probing action $X_t$ based on $\boldsymbol{\beta}$ and sends it to the channel input port. A particular policy of Alice's, which we call the ``innocent policy" and denote it by $\Pi^{\boldsymbol{0}}$, is to set $\boldsymbol{\beta}\triangleq\boldsymbol{0}$, i.e., setting all $\beta_t$ to zero and thus, Alice always sends the innocent action $X_t=0$ for all $t\in\mathbb{N}$. We also assume that Alice is optimistic in the sense that she has a policy of her choosing that is not the innocent policy. We call this Alice's active policy, and denote it $\Pi^\textnormal{A}$, a slight abuse of notation.

Eve represents an eavesdropper that is trying to decide whether Alice is innocent (i.e., using the innocent policy) or not, by observing one of the channel output sequences $(Z_t)_{t\in\mathbb{N}}\triangleq\boldsymbol{Z}$. In particular, she deploys a detector $\mathscr{D}$, which is a hypothesis testing protocol. For simplicity, we assume that Eve has full knowledge of Alice's active policy $\Pi^\textnormal{A}$. Then Eve's detector is a binary hypothesis test between the null hypothesis $H_0$ and the alternative hypothesis $H_1$ based on her observation, where
\begin{align}
    \label{eq:eve_hypo_test_inf_length}
    \begin{split}
    &H_0:~\boldsymbol{Z}\sim \mathbb{P}^{\Pi^{\boldsymbol{0}}}_{\boldsymbol{Z}},\\
    &H_1:~\boldsymbol{Z}\sim \mathbb{P}^{\Pi^{\textnormal{A}}}_{\boldsymbol{Z}},
    \end{split}
\end{align}
and $\mathbb{P}^{\Pi^{\boldsymbol{0}}}_{\boldsymbol{Z}}$ (resp. $\mathbb{P}^{\Pi^{\textnormal{A}}}_{\boldsymbol{Z}}$) denotes the probability distribution of Eve's observation sequence $\boldsymbol{Z}$ when Alice employs the innocent (resp. active) policy. The performance of Eve's detector can be characterized by the sum of its probability of missed detection and that of false alarm. To make the problem non-trivial, we assume that, roughly, the distribution of Eve's observation $Z_t$ when Alice sends an active probe $X_t=1$ can ``hide as noise" within that when Alice sends an innocent probe $X_t=0$. To make this precise, define Eve's observation channel with Alice's probing action $x$ and channel parameter $\theta$ as
\begin{equation}
    \label{eq:Eve's_channel}
    Q^x_\theta(z) \triangleq \sum_{y\in\mathcal{Y}} W_{Y,Z|X,\theta}(y,z|x,\theta),
    ~\forall \theta\in\Theta,~x\in\mathcal{X},~z\in\mathcal{Z}
    .
\end{equation}
Then this assumption translates to the absolute continuity conditions $Q^1_\theta \ll Q^0_\theta$, for $\theta\in\Theta$. If this were not the case, any active probes sent by Alice can be detected by Eve with certainty, trivializing the problem. Moreover, we assume that $Q_\theta^1(\cdot)\ne Q_\theta^0(\cdot)$ for $\theta\in\Theta$, otherwise Eve cannot detect Alice's active probing at all, for both pre-change $\theta=0$ and post-change $\theta=1$ cases. 
This assumption implies that $ \chi_{2,\theta} \triangleq \chi_2( Q^1_{\theta} \| Q^0_{\theta} ) >0$ for $\theta \in \Theta = \{0,1\}$.

We assume that Alice knows Eve's presence and is trying to employ her active policy to probe the channel parameter change while staying undetected (i.e., covert) w.r.t. Eve. To pose this as a non-trivial problem, our first major assumption is that of ``no free passive sensing", which, roughly speaking, is that the innocent action $0$ cannot provide any information about the channel change. To be precise, define Alice's observation channel with probing action $x$ and channel parameter $\theta$ as
\begin{equation}
    \label{eq:Alice's_channel}
    P^x_\theta(y) \triangleq \sum_{z\in\mathcal{Z}} W_{Y,Z|X,\theta}(y,z|x,\theta),
    ~\forall \theta\in\Theta,~x\in\mathcal{X},~y\in\mathcal{Y}
    .
\end{equation}
Then the ``no free passive sensing" assumption translates to the equality of PMFs $P^0_1(\cdot)=P^0_0(\cdot)$. If this were not the case, Alice could send the innocent symbol for all time and still obtain nonzero information on channel change via the change of observation statistics $P^0_1$ post-change versus $P^0_0$ pre-change, while maintaining covertness w.r.t. Eve trivially. We also assume the complementary condition that ``active sensing has non-zero but finite gain". To be precise, let $D\triangleq\mathbb{D}(P^1_1\|P^1_0)$, and this assumption means $0<D<\infty$. For technical reasons, we also assume that $V\triangleq\mathbb{V}(P^1_1\|P^1_0)<\infty$.

In this paper, we also assume that both Alice and Eve know the full channel statistics, i.e., they have access to the PMFs $W_{Y,Z|X,\theta}(\cdot,\cdot|x,\theta)$ for all $x\in\mathcal{X}$ and $\theta\in\Theta$. This means that, as a quickest change detection problem, both the pre-change and post-change distributions are known to both parties. We leave the relaxation of this assumption as future work.

To rigorously formalize the quickest change detection problem, we define the following sequence of probability measures. For any $k \in \mathbb{N}$, let $\mathbb{P}_k$ denote the probability measure conditioned on the change-point occurring at time $\Gamma = k$, and $\mathbb{E}_k$ the corresponding expectation. 
The overall Bayesian probability measure, which averages over the unknown change-point, is denoted by $\mathbb{P}_\pi$ (or simply $\mathbb{P}$), and let $\mathbb{E}_\pi$ (or simply $\mathbb{E}$) denote the corresponding expectation. Under $\mathbb{P}_\pi$, the change-point $\Gamma$ follows the geometric prior distribution $\pi$, such that for any event $\mathcal{E}$, $\mathbb{P}_\pi(\mathcal{E}) = \sum_{k\in\mathbb{N}} \pi_k \mathbb{P}_k(\mathcal{E})$. To stress that the probability (resp. expectation) is induced by Alice's policy $\Pi$, we also write $\mathbb{P}^{\Pi}$/$\mathbb{P}_k^{\Pi}$ (resp. $\mathbb{E}^{\Pi}$/$\mathbb{E}_k^{\Pi}$).

We call the problem of Alice designing her active policy to probe the channel change in a quickest manner, while maintaining covertness w.r.t. Eve, the covert Bayesian quickest change detection (covert BQCD) problem. Before formalizing this design as an optimization problem, we investigate the covertness constraint imposed by Eve's detection in the following section.

\section{Covertness Constraint Analysis}
\label{sec:covert_constr_analys}

The following analysis is inspired by a similar analysis in the finite-horizon setting~\cite{Tahmasbi2020b}.

Recall Eve's binary hypothesis test~\eqref{eq:eve_hypo_test_inf_length}: 
\begin{align}
    \begin{split}
    &H_0:~\boldsymbol{Z}\sim \mathbb{P}^{\Pi^{\boldsymbol{0}}}_{\boldsymbol{Z}},\\
    &H_1:~\boldsymbol{Z}\sim \mathbb{P}^{\Pi^{\textnormal{A}}}_{\boldsymbol{Z}}
    .
    \end{split}
\end{align}
A metric to quantify the performance of Eve's binary test is the total probability of error (see, e.g.,~\cite[eq.~(7.17)]{polyanskiy2025information}), i.e., the sum of probability of false alarm $P_\textnormal{Eve, False Alarm}$ and probability of missed detection $P_\textnormal{Eve, Missed Detection}$. Both probabilities are functions of the rejection region $\mathcal{R}$, which is a subset of the sequence space $\mathcal{Z}^\mathbb{N}$ where Eve rejects $H_0$ whenever she observes a sequence in $\mathcal{R}$. Thus,
\begin{align}
    P_\textnormal{Eve, Missed Detection} 
    &\triangleq  \mathbb{P}^{\Pi^\textnormal{A}}_{\boldsymbol{Z}}(\mathcal{Z}^\mathbb{N}\setminus\mathcal{R}),\\
    P_\textnormal{Eve, False Alarm} 
    &\triangleq  \mathbb{P}^{\Pi^{\boldsymbol{0}}}_{\boldsymbol{Z}}(\mathcal{R})
    .
\end{align}
Eve designs her detector $\mathscr{D}$ to minimize the total probability of error:
\begin{align}
    \inf_{\textnormal{Eve's test}} ( P_\textnormal{Eve, Missed Detection} + P_\textnormal{Eve, False Alarm} )
    &=
    \inf_{\mathcal{R}\subset\mathcal{Z}^\mathbb{N}}
    \left(
    \mathbb{P}^{\Pi^\textnormal{A}}_{\boldsymbol{Z}}(\mathcal{Z}^\mathbb{N}\setminus\mathcal{R})
    +
    \mathbb{P}^{\Pi^{\boldsymbol{0}}}_{\boldsymbol{Z}}(\mathcal{R})
    \right)
    \\&
    =
    1 - \textnormal{TV}
    \left( \mathbb{P}^{\Pi^\textnormal{A}}_{\boldsymbol{Z}}
    ,
    \mathbb{P}^{\Pi^{\boldsymbol{0}}}_{\boldsymbol{Z}}
    \right)
    \\&\geq
    1
    -
    \sqrt{\frac{1}{2}
    \mathbb{D}
    \left( \mathbb{P}^{\Pi^\textnormal{A}}_{\boldsymbol{Z}}
    \Big\|
    \mathbb{P}^{\Pi^{\boldsymbol{0}}}_{\boldsymbol{Z}}
    \right)
    }
    ,
\end{align}
where the penultimate equality follows from the variational characterization of the total variation distance (see, e.g.,~\cite[Theorem~7.7]{polyanskiy2025information}), and the last inequality is Pinsker's. Therefore, a reasonable covertness constraint may be set as
\begin{align}
    \mathbb{D}
    \left( \mathbb{P}^{\Pi^\textnormal{A}}_{\boldsymbol{Z}}
    \Big\|
    \mathbb{P}^{\Pi^{\boldsymbol{0}}}_{\boldsymbol{Z}}
    \right)
    \leq
    \delta
\end{align}
for some $\delta>0$. Also see~\cite[Appendix A]{Bloch2015b} for a discussion on the relation between relative entropy and hypothesis testing.

Directly constraining the relative entropy
$\mathbb{D}\left(\mathbb{P}^{\Pi^\textnormal{A}}_{\boldsymbol{Z}}\big\|\mathbb{P}^{\Pi^{\boldsymbol{0}}}_{\boldsymbol{Z}}\right)$
is analytically difficult because a closed-loop probing policy generally induces statistical dependence across the probing sequence $(X_t)_{t\in\mathbb{N}}$. To obtain a tractable surrogate constraint, we first analyze
$\mathbb{D}\left(\mathbb{P}^{\Pi^\textnormal{A}}_{Z^N}\big\|\mathbb{P}^{\Pi^{\boldsymbol{0}}}_{Z^N}\right)$,
the relative entropy between distributions on the truncated sequence $Z^N$ for some observation window of length $N\in\mathbb{N}$, and then upper bound it by a quantity depending only on the instantaneous probing probabilities. Then, by the monotone convergence theorem for relative entropy~\cite[Theorem~4.11]{polyanskiy2025information}, i.e.,
\begin{align}
    \lim_{N\to\infty}
    \mathbb{D}
    \left( \mathbb{P}^{\Pi^\textnormal{A}}_{Z^N}
    \Big\|
    \mathbb{P}^{\Pi^{\boldsymbol{0}}}_{Z^N}
    \right)
    =
    \mathbb{D}
    \left( \mathbb{P}^{\Pi^\textnormal{A}}_{\boldsymbol{Z}}
    \Big\|
    \mathbb{P}^{\Pi^{\boldsymbol{0}}}_{\boldsymbol{Z}}
    \right)
    ,
\end{align}
we can take the observation window length $N$ to infinity to obtain a corresponding upper bound on $\mathbb{D}\left(\mathbb{P}^{\Pi^\textnormal{A}}_{\boldsymbol{Z}}\big\|\mathbb{P}^{\Pi^{\boldsymbol{0}}}_{\boldsymbol{Z}}\right)$.

Next, for any fixed $N\in\mathbb{N}$, we derive an analytically tractable upper bound for $\mathbb{D}\left(\mathbb{P}^{\Pi^\textnormal{A}}_{Z^N}\Big\|\mathbb{P}^{\Pi^{\boldsymbol{0}}}_{Z^N}\right)$. First, observe that by conditioning on the changepoint,
\begin{align}
    \mathbb{P}^{\Pi^\textnormal{A}}_{Z^N}
    &=
    \sum_{k\in\mathbb{N}} \pi_k \mathbb{P}^{\Pi^\textnormal{A}}_{Z^N|\Gamma=k}
    ;\\
    \mathbb{P}^{\Pi^{\boldsymbol{0}}}_{Z^N}
    &=
    \sum_{k\in\mathbb{N}} \pi_k \mathbb{P}^{\Pi^{\boldsymbol{0}}}_{Z^N|\Gamma=k}
    ,
\end{align}
and thus
\begin{align}
    \label{eq:joint_convex_inf}
    \mathbb{D}
    \left(
    \mathbb{P}^{\Pi^\textnormal{A}}_{Z^N}
    \Big\|
    \mathbb{P}^{\Pi^{\boldsymbol{0}}}_{Z^N}
    \right)
\leq
    \sum_{k\in\mathbb{N}}
        \pi_k
        \mathbb{D}
        \left(
        \mathbb{P}^{\Pi^\textnormal{A}}_{Z^N|\Gamma=k}
        \Big\|
        \mathbb{P}^{\Pi^{\boldsymbol{0}}}_{Z^N|\Gamma=k}
        \right)
\end{align}
by applying the joint convexity of relative entropy~\cite[Theorem~7.5]{polyanskiy2025information}. Since both distributions are infinite mixtures, one can first truncate the mixture up to term $M$ and then renormalize to get valid PMFs, apply the joint convexity of relative entropy between finite mixtures~\cite[Theorem~7.5]{polyanskiy2025information}, then take $M\to\infty$ to obtain~\eqref{eq:joint_convex_inf}. As this is standard, we omit the proof.

We thus turn our attention to upper bound $\mathbb{D}
        \left(
        \mathbb{P}^{\Pi^\textnormal{A}}_{Z^N|\Gamma=k}
        \Big\|
        \mathbb{P}^{\Pi^{\boldsymbol{0}}}_{Z^N|\Gamma=k}\right)$
for any fixed $k\in\mathbb{N}$.
By the chain rule for relative entropy~\cite[Theorem~2.16]{polyanskiy2025information},
\begin{align}
&\label{eq:covert_analysis_conditional_relative_entropy}
\mathbb{D}
    \left(
    \mathbb{P}^{\Pi^\textnormal{A}}_{Z^N|\Gamma=k}
    \Big\|
    \mathbb{P}^{\Pi^{\boldsymbol{0}}}_{Z^N|\Gamma=k}
    \right)
=
\sum_{i=1}^{N}
\mathbb{D}
\left(
    \mathbb{P}^{\Pi^\textnormal{A}}_{Z_{i}|Z^{i-1},\Gamma=k}
\Big\|
    \mathbb{P}^{\Pi^{\boldsymbol{0}}}_{Z_{i}|Z^{i-1},\Gamma=k}
\Big|
    \mathbb{P}^{\Pi^\textnormal{A}}_{Z^{i-1}|\Gamma=k}
\right)
\\
=~&
\sum_{i=1}^{N}
\sum_{z^{i-1}\in\mathcal{Z}^{i-1}}
\mathbb{P}^{\Pi^\textnormal{A}}_{Z^{i-1}|\Gamma=k}(z^{i-1})
\mathbb{D}
\left(
    \mathbb{P}^{\Pi^\textnormal{A}}_{Z_{i}|Z^{i-1}=z^{i-1},\Gamma=k}
\Big\|
    \mathbb{P}^{\Pi^{\boldsymbol{0}}}_{Z_{i}|Z^{i-1}=z^{i-1},\Gamma=k}
\right)
,
\end{align}
where for each $z^{i-1}\in\mathcal{Z}^{i-1}$, under $\Pi^{\boldsymbol{0}}$ and conditioned on $\Gamma=k$, the channel outputs are independent with time-varying marginal
$Q^0_{\mathbb{I}\{i>k\}}$ at time $i$. Hence,
\begin{align}
&
\mathbb{D}
\left(
    \mathbb{P}^{\Pi^\textnormal{A}}_{Z_{i}|Z^{i-1}=z^{i-1},\Gamma=k}
\Big\|
    \mathbb{P}^{\Pi^{\boldsymbol{0}}}_{Z_{i}|Z^{i-1}=z^{i-1},\Gamma=k}
\right)
=
\mathbb{D}
\left(
    \mathbb{P}^{\Pi^\textnormal{A}}_{Z_{i}|Z^{i-1}=z^{i-1},\Gamma=k}
\Big\|
    \mathbb{P}^{\Pi^{\boldsymbol{0}}}_{Z_{i}|\Gamma=k}
\right)
\\
\triangleq~&\label{eq:covert_analysis_relative_entropy}
\sum_{z_i\in\mathcal{Z}}
\mathbb{P}^{\Pi^\textnormal{A}}_{Z_{i}|Z^{i-1}=z^{i-1},\Gamma=k}(z_i)
\ln
\left(
\frac{
\mathbb{P}^{\Pi^\textnormal{A}}_{Z_{i}|Z^{i-1}=z^{i-1},\Gamma=k}(z_i)
}{\mathbb{P}^{\Pi^{\boldsymbol{0}}}_{Z_{i}|\Gamma=k}(z_i)}
\right)
.
\end{align}
So we focus on the conditional PMF $\mathbb{P}^{\Pi^\textnormal{A}}_{Z_{i}|Z^{i-1}=z^{i-1},\Gamma=k}(z_i)\triangleq\mathbb{P}^{\Pi^\textnormal{A}}_{Z_i|Z^{i-1},\Gamma=k}(z_i|z^{i-1})$.
By the law of total probability, we can further express
\begin{align}
&
    \mathbb{P}^{\Pi^\textnormal{A}}_{Z_i|Z^{i-1},\Gamma=k}(z_i|z^{i-1})
=
    \sum_{x^i\in\mathcal{X}^i}
    \sum_{y^{i-1}\in\mathcal{Y}^{i-1}}
    \mathbb{P}^{\Pi^\textnormal{A}}_{X^i,Y^{i-1},Z_i|Z^{i-1},\Gamma=k}(x^i,y^{i-1},z_i|z^{i-1}) 
\\=~&\label{eq:covert_analysis_conditional_PMF_expansion}
    \sum_{x^i\in\mathcal{X}^i}
    \sum_{y^{i-1}\in\mathcal{Y}^{i-1}}
    \mathbb{P}^{\Pi^\textnormal{A}}_{X^i,Y^{i-1}|Z^{i-1},\Gamma=k}(x^i,y^{i-1}|z^{i-1}) 
    \cdot
    \mathbb{P}^{\Pi^\textnormal{A}}_{Z_i|X^i,Y^{i-1},Z^{i-1},\Gamma=k}(z_i|x^i,y^{i-1},z^{i-1})
.
\end{align}
Note that, by the discrete memoryless channel (DMC) assumption, we have
\begin{align}
    \label{eq:covert_analysis_DMC}
    \mathbb{P}^{\Pi^\textnormal{A}}_{Z_i|X^i,Y^{i-1},Z^{i-1},\Gamma=k}(z_i|x^i,y^{i-1},z^{i-1})
=
    Q^{x_i}_{ \theta_i(k) }(z_i)
=
    Q^{x_i}_{ \mathbb{I}\{i> k\} }(z_i)
    .
\end{align}
Also, by the definition of conditional PMF,
\begin{align}
    \label{eq:covert_analysis_conditional_PMF}
    \begin{split}
&
    \mathbb{P}^{\Pi^\textnormal{A}}_{X^i,Y^{i-1}|Z^{i-1},\Gamma=k}(x^i,y^{i-1}|z^{i-1}) 
\\
=~&
    \mathbb{P}^{\Pi^\textnormal{A}}_{X^{i-1},Y^{i-1}|Z^{i-1},\Gamma=k}(x^{i-1},y^{i-1}|z^{i-1})
    \cdot
    \mathbb{P}^{\Pi^\textnormal{A}}_{X_i|X^{i-1},Y^{i-1},Z^{i-1},\Gamma=k}(x_i|x^{i-1},y^{i-1},z^{i-1}) 
.
    \end{split}
\end{align}
Since the probing action $X_i\sim\textnormal{Ber}(\beta_i(X^{i-1},Y^{i-1}))$ is conditionally independent of $\boldsymbol{Z}$ and $\Gamma$ given $(X^{i-1},Y^{i-1})$, we have
\begin{align}
    \label{eq:covert_analysis_Alice_active_policy_1}
    \mathbb{P}^{\Pi^\textnormal{A}}_{X_i|X^{i-1},Y^{i-1},Z^{i-1},\Gamma=k}(x_i|x^{i-1},y^{i-1},z^{i-1})
=
    \mathbb{P}^{\Pi^\textnormal{A}}_{X_i|X^{i-1},Y^{i-1}}(x_i|x^{i-1},y^{i-1})
    .
\end{align}
For the target conditional PMF~\eqref{eq:covert_analysis_conditional_PMF_expansion} expression,
substituting the conditional PMF expressions~\eqref{eq:covert_analysis_conditional_PMF}, ~\eqref{eq:covert_analysis_Alice_active_policy_1} and the simplification via DMC~\eqref{eq:covert_analysis_DMC} yields
\begin{align}
&
    \mathbb{P}^{\Pi^\textnormal{A}}_{Z_i|Z^{i-1},\Gamma=k}(z_i|z^{i-1})
=
    \sum_{x^{i-1}\in\mathcal{X}^{i-1}}
    \sum_{y^{i-1}\in\mathcal{Y}^{i-1}}
    \mathbb{P}^{\Pi^\textnormal{A}}_{X^{i-1},Y^{i-1}|Z^{i-1},\Gamma=k}(x^{i-1},y^{i-1}|z^{i-1})
    \nonumber\\&\qquad\qquad\qquad\qquad\qquad\times
    \sum_{x_i\in\mathcal{X}}
    \mathbb{P}^{\Pi^\textnormal{A}}_{X_i|X^{i-1},Y^{i-1}}(x_i|x^{i-1},y^{i-1})
    \cdot
    Q^{x_i}_{ \mathbb{I}\{i> k\} }(z_i)
\\=~&\label{eq:covert_analysis_conditional_PMF_Zi_Zi-1}
    \mathbb{E}^{\Pi^\textnormal{A}}_{X^{i-1},Y^{i-1}|Z^{i-1}=z^{i-1},\Gamma=k}
    \left[
        \sum_{x_i\in\mathcal{X}}
        \mathbb{P}^{\Pi^\textnormal{A}}_{X_i|X^{i-1},Y^{i-1}}(x_i|X^{i-1},Y^{i-1})
        \cdot
        Q^{x_i}_{ \mathbb{I}\{i> k\} }(z_i)
    \right]
.
\end{align}
Expressing Alice's active policy with indicator functions on her decision variables $d_i(x^{i-1},y^{i-1})$ and sensing probability $\beta_i(x^{i-1},y^{i-1})$, we have
\begin{align}
    \label{eq:covert_analysis_active_policy}
    \begin{split}
    \mathbb{P}^{\Pi^\textnormal{A}}_{X_i|X^{i-1},Y^{i-1}}(x_i|x^{i-1},y^{i-1})
&=
    \left(
        1
    -
        \mathbb{I}
        \left\{ d_i(x^{i-1},y^{i-1}) = 0 \right\}
        \beta_i(x^{i-1},y^{i-1})
    \right)
    \cdot \mathbb{I}\{x_i=0\}
\\&~+
    \mathbb{I}
    \left\{ d_i(x^{i-1},y^{i-1}) = 0 \right\}
    \beta_i(x^{i-1},y^{i-1})\cdot \mathbb{I}\{x_i=1\}
    ,
    \end{split}
\end{align}
since Alice only actively probes ($x_i=1$) if she decides to continue and the $\textnormal{Ber}(\beta_i)$ realization is $1$. From this, we can further compute the formula inside the expectation~\eqref{eq:covert_analysis_conditional_PMF_Zi_Zi-1} as
\begin{align}
&
    \sum_{x_i\in\mathcal{X}}
        \mathbb{P}^{\Pi^\textnormal{A}}_{X_i|X^{i-1},Y^{i-1}}(x_i|X^{i-1},Y^{i-1})
        \cdot
        Q^{x_i}_{ \mathbb{I}\{i> k\} }(z_i)
\nonumber\\=~&
    \left(
        1
    -
        \mathbb{I}
        \left\{ d_i(X^{i-1},Y^{i-1}) = 0 \right\}
        \beta_i(X^{i-1},Y^{i-1})
    \right)
    \cdot 
    Q^{0}_{ \mathbb{I}\{i> k\} }(z_i)
\nonumber\\&+
    \mathbb{I}
    \left\{ d_i(X^{i-1},Y^{i-1}) = 0 \right\}
    \beta_i(X^{i-1},Y^{i-1})
    \cdot 
    Q^{1}_{ \mathbb{I}\{i> k\} }(z_i)
\\=~&\label{eq:covert_analysis_bernoulli_decomp}
    Q^{0}_{ \mathbb{I}\{i> k\} }(z_i)
+
    \mathbb{I}
    \left\{ d_i(X^{i-1},Y^{i-1}) = 0 \right\}
    \beta_i(X^{i-1},Y^{i-1})
    \left(
        Q^{1}_{ \mathbb{I}\{i> k\} }(z_i)
    -
        Q^{0}_{ \mathbb{I}\{i> k\} }(z_i)
    \right)
.
\end{align}
Applying~\eqref{eq:covert_analysis_bernoulli_decomp} in~\eqref{eq:covert_analysis_conditional_PMF_Zi_Zi-1}, by linearity of expectation, we have
\begin{align}
&
    \mathbb{P}^{\Pi^\textnormal{A}}_{Z_i|Z^{i-1},\Gamma=k}(z_i|z^{i-1})
=
    Q^{0}_{ \mathbb{I}\{i> k\} }(z_i)
    + 
    \left( 
        Q^{1}_{ \mathbb{I}\{i> k\} }(z_i)
        -
        Q^{0}_{ \mathbb{I}\{i> k\} }(z_i)
    \right)
\nonumber\\&\qquad\qquad\qquad\qquad\quad\times
    \mathbb{E}^{\Pi^\textnormal{A}}_{X^{i-1},Y^{i-1}|Z^{i-1}=z^{i-1},\Gamma=k}
    \left[
        \mathbb{I}
        \left\{ d_i(X^{i-1},Y^{i-1}) = 0 \right\}
        \beta_i(X^{i-1},Y^{i-1})
    \right]
.
\end{align}
For brevity, we introduce the following definition.
\begin{definition}
    For $i\in\mathbb{N}$, define the deterministic function $\tilde{\beta}_{i}:\mathcal{Z}^{i-1}\times\mathbb{N}\to[0,1]$ by
    \begin{align}
        \tilde{\beta}_{i}( z^{i-1}, k)
    \triangleq
        \mathbb{E}^{\Pi^\textnormal{A}}_{X^{i-1},Y^{i-1}|Z^{i-1},\Gamma}
        \left[
            \mathbb{I}
            \left\{ d_i(X^{i-1},Y^{i-1}) = 0 \right\}
            \beta_i(X^{i-1},Y^{i-1})
        \Big\vert
            z^{i-1},k
        \right]
        .
    \end{align}
\end{definition}
\begin{remark}
   We note that $\tilde{\beta}_i$ can be understood as Eve's estimate of the probability that Alice is active probing at time $i$ given Eve's causal observation $z^{i-1}$, conditioned on $\Gamma=k$. Also note that $(\tilde{\beta}_i)_{i\in\mathbb{N}}$ is parameterized by the policy $\Pi^\textnormal{A}$.
\end{remark}

With this definition, the conditional PMF of $Z_i$ given $Z^{i-1}=z^{i-1}$ and $\Gamma=k$ under policy $\Pi^\textnormal{A}$ can be expressed as
\begin{align}
    \label{eq:covert_analysis_conditional_PMF_Zi_Zi-1_2}
    \mathbb{P}^{\Pi^\textnormal{A}}_{Z_i|Z^{i-1},\Gamma=k}(z_i|z^{i-1})
=
    Q^{0}_{ \mathbb{I}\{i> k\} }(z_i)
    + 
    \tilde{\beta}_{i}( z^{i-1}, k)
    \left( 
        Q^{1}_{ \mathbb{I}\{i> k\} }(z_i)
        -
        Q^{0}_{ \mathbb{I}\{i> k\} }(z_i)
    \right)
    .
\end{align}
Noting that under $\Pi^{\boldsymbol{0}}$ we have
$\mathbb{P}^{\Pi^{\boldsymbol{0}}}_{Z_{i}|\Gamma=k}(z_i) = Q^0_{\mathbb{I}\{i> k\}}(z_i)$, and hence, the log-likelihood ratio in~\eqref{eq:covert_analysis_relative_entropy} is
\begin{align}
    \label{eq:covert_analysis_log_like}
&
    \ln
        \left(
            \frac
                { \mathbb{P}^{\Pi^\textnormal{A}}_{Z_{i}|Z^{i-1}=z^{i-1},\Gamma=k}(z_i) }
                { \mathbb{P}^{\Pi^{\boldsymbol{0}}}_{Z_{i}|\Gamma=k}(z_i) }
        \right)
=
    \ln
        \left(
            1
        + 
            \tilde{\beta}_i( z^{i-1}, k)
            \frac
                {   Q^{1}_{ \mathbb{I}\{i> k\} }(z_i) 
                -
                    Q^{0}_{ \mathbb{I}\{i> k\} }(z_i)
                }
                { Q^{0}_{ \mathbb{I}\{i> k\} }(z_i) }
        \right)
.
\end{align}
Therefore, substituting~\eqref{eq:covert_analysis_conditional_PMF_Zi_Zi-1_2} and~\eqref{eq:covert_analysis_log_like} into~\eqref{eq:covert_analysis_relative_entropy} yields
\begin{align}
    \mathbb{D}
    \left(
        \mathbb{P}^{\Pi^\textnormal{A}}_{Z_{i}|Z^{i-1}=z^{i-1},\Gamma=k}
    \Big\|
        \mathbb{P}^{\Pi^{\boldsymbol{0}}}_{Z_{i}|\Gamma=k}
    \right)
=&
    \sum_{z_i\in\mathcal{Z}}
        \left(
            Q^{0}_{ \mathbb{I}\{i> k\} }(z_i)
        +
            \tilde{\beta}_i(z^{i-1},k)
            \left( 
                Q^{1}_{ \mathbb{I}\{i> k\} }(z_i)
                -
                Q^{0}_{ \mathbb{I}\{i> k\} }(z_i)
            \right)
        \right)
\nonumber\\&\qquad\times
    \ln
        \left(
            1
        +
            \tilde{\beta}_i(z^{i-1},k)
            \frac
            { 
                Q^{1}_{ \mathbb{I}\{i> k\} }(z_i)
                -
                Q^{0}_{ \mathbb{I}\{i> k\} }(z_i)
            }{ 
                Q^{0}_{ \mathbb{I}\{i> k\} }(z_i)
            }
        \right)
.
\end{align}
To get a tractable upper bound, we follow the same idea as in~\cite[eq.~(118)-(120)]{Bloch2015b} but with the simpler inequality $\ln(1+x)\leq x$ for $x>-1$,
\begin{align}
&
    \mathbb{D}
    \left(
        \mathbb{P}^{\Pi^\textnormal{A}}_{Z_{i}|Z^{i-1}=z^{i-1},\Gamma=k}
    \Big\|
        \mathbb{P}^{\Pi^{\boldsymbol{0}}}_{Z_{i}|\Gamma=k}
    \right)
\leq
    \left(
        \tilde{\beta}_i(z^{i-1},k)
    \right)^2 
    \chi_2 
        \left( 
            Q^{1}_{ \mathbb{I}\{i> k\} } 
        \Big\|
            Q^{0}_{ \mathbb{I}\{i> k\} }
        \right)
.
\end{align}
The boundary case in which the logarithm argument is zero is handled by continuity, with the convention $0\ln0\triangleq0$.
Therefore, the conditional relative entropy in~\eqref{eq:covert_analysis_conditional_relative_entropy} can be upper bounded as
\begin{align}
&
    \mathbb{D}
    \left(
        \mathbb{P}^{\Pi^\textnormal{A}}_{Z_{i}|Z^{i-1},\Gamma=k}
    \Big\|
        \mathbb{P}^{\Pi^{\boldsymbol{0}}}_{Z_{i}|Z^{i-1},\Gamma=k}
    \Big|
        \mathbb{P}^{\Pi^\textnormal{A}}_{Z^{i-1}|\Gamma=k}
    \right)
\nonumber\\\triangleq~&
    \sum_{z^{i-1}\in\mathcal{Z}^{i-1}}
    \mathbb{P}^{\Pi^\textnormal{A}}_{Z^{i-1}|\Gamma=k}(z^{i-1})
    \mathbb{D}
    \left(
        \mathbb{P}^{\Pi^\textnormal{A}}_{Z_{i}|Z^{i-1}=z^{i-1},\Gamma=k}
    \Big\|
        \mathbb{P}^{\Pi^{\boldsymbol{0}}}_{Z_{i}|Z^{i-1}=z^{i-1},\Gamma=k}
    \right)
\\\leq~& 
    \chi_2 
        \left( 
            Q^{1}_{ \mathbb{I}\{i> k\} } 
        \Big\|
            Q^{0}_{ \mathbb{I}\{i> k\} }
        \right)
    \sum_{z^{i-1}\in\mathcal{Z}^{i-1}}
    \mathbb{P}^{\Pi^\textnormal{A}}_{Z^{i-1}|\Gamma=k}(z^{i-1})
    \left(
        \tilde{\beta}_i(z^{i-1},k)
    \right)^2
\\=~&
    \chi_2 
        \left( 
            Q^{1}_{ \mathbb{I}\{i> k\} } 
        \Big\|
            Q^{0}_{ \mathbb{I}\{i> k\} }
        \right)
    \mathbb{E}^{\Pi^\textnormal{A}}_{Z^{i-1}|\Gamma=k}
    \left[
        \left(
            \tilde{\beta}_i(Z^{i-1},k)
        \right)^2
    \right]
.
\end{align}
Applying this bound in~\eqref{eq:covert_analysis_conditional_relative_entropy} yields
\begin{align}
    \label{eq:covert_analysis_cond_rela_entr_upp_bdd_1}
    \mathbb{D}
    \left(
    \mathbb{P}^{\Pi^\textnormal{A}}_{Z^N|\Gamma=k}
    \Big\|
    \mathbb{P}^{\Pi^{\boldsymbol{0}}}_{Z^N|\Gamma=k}
    \right)
\le
    \sum_{i=1}^{N}
    \chi_2 
        \left( 
            Q^{1}_{ \mathbb{I}\{i> k\} } 
        \Big\|
            Q^{0}_{ \mathbb{I}\{i> k\} }
        \right)
    \mathbb{E}^{\Pi^\textnormal{A}}_{Z^{i-1}|\Gamma=k}
    \left[
        \left(
            \tilde{\beta}_i(Z^{i-1},k)
        \right)^2
    \right]
.
\end{align}
For any $z^{i-1}\in\mathcal{Z}^{i-1}$, by convexity of $x\mapsto x^2$,
\begin{align}
    \left(
        \tilde{\beta}_i(z^{i-1},k)
        \right)^2
&\triangleq   
    \left(
        \mathbb{E}^{\Pi^\textnormal{A}}_{X^{i-1},Y^{i-1}|Z^{i-1},\Gamma=k}
        \left[
            \mathbb{I}
            \left\{ d_i(X^{i-1},Y^{i-1}) = 0 \right\}
            \beta_i(X^{i-1},Y^{i-1})
        \Big\vert
            z^{i-1}
        \right]
    \right)^2
\\&\leq
    \mathbb{E}^{\Pi^\textnormal{A}}_{X^{i-1},Y^{i-1}|Z^{i-1},\Gamma=k}
        \left[
            \left(
            \mathbb{I}
            \left\{ d_i(X^{i-1},Y^{i-1}) = 0 \right\}
            \beta_i(X^{i-1},Y^{i-1})
            \right)^2
        \Big\vert
            z^{i-1}
        \right]
\\&=\label{eq:covert_analysis_tild_beta_squa_upp_bdd}
    \mathbb{E}^{\Pi^\textnormal{A}}_{X^{i-1},Y^{i-1}|Z^{i-1},\Gamma=k}
        \left[
            \mathbb{I}
            \left\{ d_i(X^{i-1},Y^{i-1}) = 0 \right\}
            \left(
            \beta_i(X^{i-1},Y^{i-1})
            \right)^2
        \Big\vert
            z^{i-1}
        \right]
    .
\end{align}
Using~\eqref{eq:covert_analysis_tild_beta_squa_upp_bdd} in~\eqref{eq:covert_analysis_cond_rela_entr_upp_bdd_1} yields
\begin{align}
&
    \mathbb{D}
    \left(
    \mathbb{P}^{\Pi^\textnormal{A}}_{Z^N|\Gamma=k}
    \Big\|
    \mathbb{P}^{\Pi^{\boldsymbol{0}}}_{Z^N|\Gamma=k}
    \right)
\nonumber\\\leq~&
    \sum_{i=1}^{N}
    \chi_2 
        \left( 
            Q^{1}_{ \mathbb{I}\{i> k\} } 
        \Big\|
            Q^{0}_{ \mathbb{I}\{i> k\} }
        \right)
\nonumber\\\times~&
    \mathbb{E}^{\Pi^\textnormal{A}}_{Z^{i-1}|\Gamma=k}
    \left[
         \mathbb{E}^{\Pi^\textnormal{A}}_{X^{i-1},Y^{i-1}|Z^{i-1},\Gamma=k}
        \left[
            \mathbb{I}
            \left\{ d_i(X^{i-1},Y^{i-1}) = 0 \right\}
            \left(
            \beta_i(X^{i-1},Y^{i-1})
            \right)^2
        \Big\vert
            Z^{i-1}
        \right]
    \right]
\\\overset{\textnormal{(a)}}{=}~&
    \sum_{i=1}^{N}
    \chi_2 
        \left( 
            Q^{1}_{ \mathbb{I}\{i> k\} } 
        \Big\|
            Q^{0}_{ \mathbb{I}\{i> k\} }
        \right)
    \mathbb{E}^{\Pi^\textnormal{A}}_{k}
    \left[
        \mathbb{I}
            \left\{ d_i(X^{i-1},Y^{i-1}) = 0 \right\}
        \left(
            \beta_i(X^{i-1},Y^{i-1})
        \right)^2
    \right]
\\\overset{\textnormal{(b)}}{=}~&
    \sum_{i=1}^{N}
    \chi_2 
        \left( 
            Q^{1}_{ \mathbb{I}\{i> k\} } 
        \Big\|
            Q^{0}_{ \mathbb{I}\{i> k\} }
        \right)
    \mathbb{E}^{\Pi^\textnormal{A}}_{k}
    \left[
        \mathbb{I}
            \left\{ \tau^\textnormal{A} > i \right\}
        \left(
            \beta_i(X^{i-1},Y^{i-1})
        \right)^2
    \right]
\\\overset{\textnormal{(c)}}{=}~&
    \mathbb{E}^{\Pi^\textnormal{A}}_{k}
    \left[
    \sum_{i=1}^{N}
    \chi_2 
        \left( 
            Q^{1}_{ \mathbb{I}\{i> k\} } 
        \Big\|
            Q^{0}_{ \mathbb{I}\{i> k\} }
        \right)
        \mathbb{I}
            \left\{ \tau^\textnormal{A} > i \right\}
        \left(
            \beta_i(X^{i-1},Y^{i-1})
        \right)^2
    \right]
\\\overset{\textnormal{(d)}}{=}~&\label{eq:covert_analysis_rela_entr_upp_bdd_cond_changepoint}
    \mathbb{E}^{\Pi^\textnormal{A}}_{k}
    \left[
    \sum_{i=1}^{\min\{\tau^\textnormal{A},N\}}
    \chi_2 
        \left( 
            Q^{1}_{ \mathbb{I}\{i> k\} } 
        \Big\|
            Q^{0}_{ \mathbb{I}\{i> k\} }
        \right)
        \left(
            \beta_i(X^{i-1},Y^{i-1})
        \right)^2
    \right]
,
\end{align}
where (a) follows from the tower property, (b) follows from the relation between Alice's stopping time $\tau^\textnormal{A}$ and the stopping variables $d_i$, (c) follows from linearity of expectation, and (d) follows from our convention that Alice sends the innocent symbol after stopping, so the effective sensing probability vanishes beyond the stopping epoch.

We note that this bound~\eqref{eq:covert_analysis_rela_entr_upp_bdd_cond_changepoint} could also be useful if we wish to consider an alternative covertness metric that is robust to the changepoint realization $k$, such as considering bounding its supremum over $k\in\mathbb{N}$ by $\delta$ instead.

Finally, applying the bound~\eqref{eq:covert_analysis_rela_entr_upp_bdd_cond_changepoint} in~\eqref{eq:joint_convex_inf} yields 
\begin{align}
    \mathbb{D}
    \left(
    \mathbb{P}^{\Pi^\textnormal{A}}_{Z^N}
    \Big\|
    \mathbb{P}^{\Pi^{\boldsymbol{0}}}_{Z^N}
    \right)
&\le
    \sum_{k\in\mathbb{N}}
        \pi_k
           \mathbb{E}^{\Pi^\textnormal{A}}_{k}
    \left[
    \sum_{i=1}^{\min\{\tau^\textnormal{A},N\}}
    \chi_2 
        \left( 
            Q^{1}_{ \mathbb{I}\{i> k\} } 
        \Big\|
            Q^{0}_{ \mathbb{I}\{i> k\} }
        \right)
        \left(
            \beta_i(X^{i-1},Y^{i-1})
        \right)^2
    \right]    
\\&~=
    \mathbb{E}^{\Pi^\textnormal{A}}
    \left[
    \sum_{i=1}^{\min\{\tau^\textnormal{A},N\}}
    \chi_2 
        \left( 
            Q^{1}_{ \mathbb{I}\{i> \Gamma\} } 
        \Big\|
            Q^{0}_{ \mathbb{I}\{i> \Gamma\} }
        \right)
        \left(
            \beta_i(X^{i-1},Y^{i-1})
        \right)^2
    \right]
    ,
\end{align}
which, after taking $N\to\infty$ and applying monotone convergence theorem on the right-hand side (we assume that $\tau^\textnormal{A}<\infty$ with probability $1$, otherwise Alice would have an infinite average detection delay, which is inadmissible; see Section~\ref{sec:opt_prob}), we obtain
\begin{align}
    \mathbb{D}
    \left(
    \mathbb{P}^{\Pi^\textnormal{A}}_{\boldsymbol{Z}}
    \Big\|
    \mathbb{P}^{\Pi^{\boldsymbol{0}}}_{\boldsymbol{Z}}
    \right)
&\le
    \mathbb{E}^{\Pi^\textnormal{A}}
    \left[
    \sum_{i=1}^{\tau^\textnormal{A}}
    \chi_2 
        \left( 
            Q^{1}_{ \mathbb{I}\{i> \Gamma\} } 
        \Big\|
            Q^{0}_{ \mathbb{I}\{i> \Gamma\} }
        \right)
        \left(
            \beta_i(X^{i-1},Y^{i-1})
        \right)^2
    \right]
    .
\end{align}
Therefore, one way to ensure the covertness constraint $ 
    \mathbb{D}
    \left(
    \mathbb{P}^{\Pi^\textnormal{A}}_{\boldsymbol{Z}}
    \Big\|
    \mathbb{P}^{\Pi^{\boldsymbol{0}}}_{\boldsymbol{Z}}
    \right)\leq\delta$ 
is to optimize Alice's active sensing policy $\Pi^\textnormal{A}$ using some quickest change detection metrics while enforcing the following constraint
\begin{align}
    \mathbb{E}^{\Pi^\textnormal{A}}
    \left[
    \sum_{i=1}^{\tau^\textnormal{A}}
    \chi_2 
        \left( 
            Q^{1}_{ \mathbb{I}\{i> \Gamma\} } 
        \Big\|
            Q^{0}_{ \mathbb{I}\{i> \Gamma\} }
        \right)
        \left(
            \beta_i(X^{i-1},Y^{i-1})
        \right)^2
    \right]
\leq
    \delta
.
\end{align}

\begin{remark}
This surrogate metric captures two key aspects of covert sensing: the probing sequence should both use low probing intensity and avoid inducing statistically detectable structure. The $\beta_i^2$ term reflects the instantaneous probing intensity, while the $\chi_2$-divergence quantifies the distinguishability induced at Eve's observation channel.
\end{remark}

\begin{remark}
For $N\to\infty$, this metric takes the form of an expected accumulated cost up to a stopping time, where the instantaneous cost depends only on the current probing probability $\beta_i$ and channel state $\theta_i=\mathbb{I}\{i>\Gamma\}$. This structure makes the resulting active policy optimization amenable to a POMDP formulation.
\end{remark}

\section{Alice's Active Policy as Optimization}
\label{sec:opt_prob}

We formally define Alice's active policy as an optimization problem. First, we give a few definitions on the relevant metrics for covert quickest change detection.

Following the covertness constraint analysis in Sec.~\ref{sec:covert_constr_analys}, we set the following.
\begin{definition}
    The \emph{expected covertness budget (ECB)} of a policy $\Pi^\textnormal{A}$ is 
    \begin{align}
        \label{eq:ecb_def}
        \textnormal{ECB}
            \left(
                \Pi^\textnormal{A}
            \right)
    \triangleq
        \mathbb{E}^{\Pi^\textnormal{A}}
        \left[
            \sum_{i=1}^{\tau^\textnormal{A}}
            \chi_2 
            \left( 
                Q^{1}_{ \mathbb{I}\{i> \Gamma\} } 
            \Big\|
                Q^{0}_{ \mathbb{I}\{i> \Gamma\} }
            \right)
            \left(
                \beta_i(X^{i-1},Y^{i-1}) 
            \right)^2
        \right]
    .
    \end{align}
\end{definition}

As usual in Bayesian quickest change detection (and as in, e.g., ~\cite{banerjee_bayesian_2011,xie2021sequential}), we define the two metrics for quickest change detection as follows.
\begin{definition}
    The \emph{average detection delay (ADD)} of a policy $\Pi^\textnormal{A}$ is
    \begin{align}
        \label{eq:add_def}
        \textnormal{ADD}\left(
                \Pi^\textnormal{A}
            \right)
    \triangleq
        \mathbb{E}^{\Pi^\textnormal{A}}
        \left[
            \left(
                \tau^\textnormal{A}
            -
                \Gamma
            \right)^+
        \right]
    .
    \end{align}
\end{definition}
\begin{definition}
    The \emph{probability of false alarm (PFA)} of a policy $\Pi^\textnormal{A}$ is
    \begin{align}
        \label{eq:pfa_constr}
        \textnormal{PFA}\left(
                \Pi^\textnormal{A}
            \right)
    \triangleq
        \mathbb{P}^{\Pi^\textnormal{A}}
        \left\{
                \tau^\textnormal{A}
            <
                \Gamma
        \right\}
    .
    \end{align}
\end{definition}
We can then formulate Alice's active policy optimization problem as the following. For any pair of given performance metrics $\alpha\in~]0,1[$ and $\delta>0$, Alice designs her active sensing policy $\Pi^\textnormal{A}$ to solve the following problem:
    \begin{align}
        \textnormal{ADD}^\star(\alpha,\delta)
        \triangleq
        ~&\underset{\Pi^\textnormal{A}\in\mathscr{P}^\textnormal{A}}{\inf}
        \quad
        \textnormal{ADD}\left(\Pi^\textnormal{A}\right),
        \\
        &\text{subject to}\quad
        \textnormal{PFA}\left(\Pi^\textnormal{A}\right) \leq \alpha
        \textnormal{~and~}
        \textnormal{ECB}\left(\Pi^\textnormal{A}\right) \leq \delta.
    \end{align}
    where the set $\mathscr{P}^\textnormal{A}$ of admissible policies satisfies the general causal structure detailed in Sec.~\ref{sec:sys_model}.

We address the issue of admissibility by showing that there is always an innocent policy $\Pi^{\boldsymbol{0}}=(\tau^{\boldsymbol{0}},\boldsymbol{0})$ that satisfies both constraints while ensuring a finite ADD. Let the stopping time $\tau^{\boldsymbol{0}}\triangleq N_\alpha\triangleq\lceil|\ln\alpha|/d\rceil$ be a deterministic constant.
Note that
\begin{align}
    \textnormal{PFA}\left(\Pi^{\boldsymbol{0}}\right)
    \triangleq 
    \mathbb{P}_{\pi} \{ \tau^{\boldsymbol{0}} < \Gamma \}
    =
    \mathbb{P}_{\pi} \{ \Gamma > N_\alpha \}
    =
   \exp(- d N_\alpha)
   \leq
   \exp(- d |\ln\alpha| / d)
   = \alpha
   .
\end{align}
Also, $\textnormal{ECB}\left(\Pi^{\boldsymbol{0}}\right)=0\leq\delta$. Moreover, $\textnormal{ADD}\left(\Pi^{\boldsymbol{0}}\right)
\triangleq 
\mathbb{E}_{\pi} \left[
            \left(
                \tau^{\boldsymbol{0}}
            -
                \Gamma
            \right)^+
        \right]
    \le 
        \mathbb{E}_{\pi} \left[
            \tau^{\boldsymbol{0}}
        \right]
    = N_\alpha < \infty$.
    Hence, for all $(\alpha,\delta)$, the minimal ADD satisfies
        \begin{align}
            \textnormal{ADD}^\star(\alpha,\delta)
            \le
            \textnormal{ADD}\left(\Pi^{\boldsymbol{0}}\right)
            \le
            N_\alpha
            =
            \left\lceil \frac{|\ln\alpha|}{d} \right\rceil
            .
        \end{align}
    Therefore, for any $\delta>0$, as $\alpha\to0^+$,
    \begin{align}
        \limsup_{\alpha\to0^+} \frac{\textnormal{ADD}^\star(\alpha,\delta)}{|\ln\alpha|}
        \le
         \limsup_{\alpha\to0^+}
        \frac{\textnormal{ADD}\left(\Pi^{\boldsymbol{0}}\right)}{|\ln\alpha|}
        \le
        \frac{1}{d}
        .
    \end{align}
    On the other hand, since $(N_\alpha-\Gamma)^+ +\Gamma\ge N_\alpha$, we have $\textnormal{ADD}\left(\Pi^{\boldsymbol{0}}\right)+\mathbb{E}_\pi[\Gamma]\ge N_\alpha$, Therefore, we have that $N_\alpha - 1/\rho \le \textnormal{ADD}\left(\Pi^{\boldsymbol{0}}\right)\le N_\alpha$, and thus
    \begin{equation}
        \textnormal{ADD}\left(\Pi^{\boldsymbol{0}}\right)
    =\frac{|\ln\alpha|}{d} + O(1).
    \end{equation}
    This shows that, even without any active probes, Alice can achieve an ADD that scales like $|\ln\alpha|/d$ as $\alpha\to0^+$ solely from her knowledge of the prior. This serves as a baseline, and begs the question: under stringent covertness constraints, can Alice do better than the innocent policy $\Pi^{\boldsymbol{0}}$? If so, how much better?

\section{Second-Order Converse for Covert Bayesian QCD}
\label{sec:sec-ord-conv}

We derive the second-order converse for covert Bayesian QCD by quantifying the tradeoff between information acquisition and covertness. Roughly speaking, the average detection delay is reduced by actively probing the channel and accumulating post-change information, but the ECB constraint fundamentally limits the total probing effort that can be expended covertly.

Let $L_t$ denote the modulated log-likelihood ratio (LLR) of Alice's observation at time $t$, i.e.,
\begin{equation}
    \label{eq:modu_LLR}
    L_t \triangleq L(X_t,Y_t) \triangleq \ln \frac{ P^{X_t}_1 (Y_t) }{  P^{X_t}_0 (Y_t) }
    .
\end{equation}
Let $\ell_t$ denote the unmodulated LLR,
\begin{equation}
    \label{eq:unmodu_LLR}
    \ell_t \triangleq \ell(Y_t) \triangleq \ln \frac{ P^1_1 (Y_t) }{  P^1_0 (Y_t) }
    .
\end{equation}
By the ``no free passive sensing" assumption, $L_t = X_t\ell_t$.

Furthermore, for any policy $\Pi^\textnormal{A}=(\tau^\textnormal{A},\boldsymbol{\beta})$, define the expected post-change sensing effort
\begin{equation}
    B\left(\Pi^\textnormal{A}\right)
    \triangleq
    \mathbb{E}_\pi\left[
        \sum_{t=\Gamma+1}^{\tau^{\textnormal{A}}} \beta_t
    \right],
\end{equation}
which measures the total expected probing activity accumulated after the changepoint and before stopping. We use the convention that $\sum_{i=n}^m(\cdot)\triangleq0$ whenever $m<n$. Observe that
\[
B\left(\Pi^\textnormal{A}\right)
\le
\mathbb{E}_\pi\left[
    \sum_{t=\Gamma+1}^{\tau^\textnormal{A}} 1
\right]
=
\textnormal{ADD}\left(\Pi^\textnormal{A}\right)
<\infty
\]
for any admissible policy.

The following lemma formalizes the intuition that both passive waiting, captured by the prior exponent $d$, and active probing, captured by the information gain $D$, contribute to overcoming the false-alarm constraint.

\begin{lemma}[Non-Asymptotic Information Bound]
\label{lem:info}
Fix any $\alpha\in]0,1[$. For any causal policy $\Pi^\textnormal{A}$ satisfying $\textnormal{PFA}(\Pi^\textnormal{A}) \le \alpha$, its average detection delay $\textnormal{ADD}\left(\Pi^\textnormal{A}\right)$ and expected post-change sensing effort $B\left(\Pi^\textnormal{A}\right)$ satisfy:
\begin{equation}
    d \cdot \textnormal{ADD}\left(\Pi^\textnormal{A}\right) + D \cdot B\left(\Pi^\textnormal{A}\right) \ge |\ln \alpha| - \frac{d}{\rho}
    .
\end{equation}
Recall that we assume $D\triangleq\mathbb{D}(P^1_1\|P^1_0)$ satisfies $0<D<\infty$.
\end{lemma}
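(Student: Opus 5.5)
The plan is a change-of-measure (data-processing) argument. I compare the true Bayesian law $\mathbb{P}_\pi$ of the pair $(\Gamma,\mathcal{H})$, where $\mathcal{H}\triangleq(X^{\tau^\textnormal{A}-1},Y^{\tau^\textnormal{A}-1})$ is Alice's history at stopping, with a reference law $\mathbb{Q}$ under which the post-change regime starts strictly after $\tau^\textnormal{A}$. The ``cost'' of this move is paid in two currencies: the prior tail, which contributes $d$ per unit of delay, and the likelihood ratio of Alice's observations, which contributes $D$ per unit of post-change sensing effort.

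\emph{Step 1 (factorization).} Because of the unit delay, $\tau^\textnormal{A}$ and the actions are $\mathcal{F}_t$-measurable. The ``no free passive sensing'' assumption gives $L_t=X_t\ell_t$. Together these yield the density of $\mathcal{H}$ under $\mathbb{P}_k$:
\begin{equation*}
p_0(\mathcal{H})\exp\Bigl(\Lambda_k\Bigr),\qquad \Lambda_k\triangleq\sum_{t=k+1}^{\tau^\textnormal{A}-1}X_t\ell_t .
\end{equation*}
Here $p_0$ is the density when the change never occurs, and the policy terms $\beta_t$ cancel since they are common to both laws. In particular, on $\{\Gamma\ge\tau^\textnormal{A}\}$ the data carry no information, so $\mathbb{P}_\pi(\mathcal{H},\Gamma\ge\tau^\textnormal{A})=T_{\tau^\textnormal{A}-1}\,p_0(\mathcal{H})$.

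\emph{Step 2 (express the PFA as a change-of-measure expectation).} I will write
\begin{equation*}
\alpha\ \ge\ \mathbb{P}_\pi\{\tau^\textnormal{A}<\Gamma\}
=\mathbb{E}_\pi\Bigl[\frac{T_{\tau^\textnormal{A}}}{\pi_\Gamma}\,e^{-\Lambda_\Gamma}\,w(\Gamma,\mathcal{H})\Bigr].
\end{equation*}
The first factor moves the changepoint from $\Gamma$ to beyond $\tau^\textnormal{A}$. The weight $w(\Gamma,\mathcal{H})\ge 0$ is chosen to satisfy $\sum_k \mathbb{P}_\pi(\Gamma=k\mid\mathcal{H})\,w(k,\mathcal{H})=1$ on the relevant range. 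The natural choice makes $w$ the ratio of a reference distribution on $k$ to the posterior; I would take the prior itself restricted and renormalized, so that the weight's log has an explicit expectation.

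Then I apply Jensen ($\ln\mathbb{E}\ge\mathbb{E}\ln$) to get
\begin{equation*}
\ln\alpha\ge\mathbb{E}_\pi\bigl[\ln T_{\tau^\textnormal{A}}-\ln\pi_\Gamma-\Lambda_\Gamma+\ln w\bigr].
\end{equation*}
With $\ln T_n=-dn$ and $\ln\pi_k=\ln\rho-d(k-1)$, the prior part becomes $-d\,\mathbb{E}_\pi[\tau^\textnormal{A}-\Gamma]$ plus constants. I split $\tau^\textnormal{A}-\Gamma\le(\tau^\textnormal{A}-\Gamma)^+$, which gives $-d\,\textnormal{ADD}(\Pi^\textnormal{A})$. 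The remaining constants, including the $\ln w$ term, should be controlled by $d\,\mathbb{E}_\pi[\Gamma]=d/\rho$; this is where the $-d/\rho$ in the statement comes from.

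\emph{Step 3 (information term).} By optional stopping / Wald, with $\mathbb{E}[X_t\ell_t\mid\mathcal{F}_t,\Gamma<t]=\beta_t D$ post-change and $\tau^\textnormal{A}$ integrable by admissibility, I get
\begin{equation*}
\mathbb{E}_\pi[\Lambda_\Gamma]=D\,\mathbb{E}_\pi\Bigl[\sum_{t=\Gamma+1}^{\tau^\textnormal{A}-1}\beta_t\Bigr]\le D\,B(\Pi^\textnormal{A}).
\end{equation*}
This uses $V<\infty$ only to justify integrability. Combining Steps 2 and 3 gives $|\ln\alpha|\le d\,\textnormal{ADD}+D\,B+d/\rho$.

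The main obstacle is Step 2: choosing the weight $w$ (equivalently, the reference measure $\mathbb{Q}$) so that the identity is exact. The Jensen remainder must be a clean constant rather than an entropy term depending on the policy. My first attempt is $\mathbb{Q}$ equal to the joint law with $\Gamma$ drawn from the prior but independent of the data, and the data generated entirely pre-change. I would then apply the data-processing inequality for relative entropy to the indicator of $\{\tau^\textnormal{A}<\Gamma\}$: here $\mathbb{Q}\{\tau^\textnormal{A}<\Gamma\}=\mathbb{E}_\mathbb{Q}[e^{-d\tau^\textnormal{A}}]$, and $\mathbb{D}(\mathbb{P}\|\mathbb{Q})=\mathbb{E}[\Lambda_\Gamma]\le D\,B$. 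The leftover $\mathbb{E}[\tau^\textnormal{A}]$ versus $\textnormal{ADD}$ discrepancy is bounded by $\mathbb{E}_\pi[\Gamma]=1/\rho$.
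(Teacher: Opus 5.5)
There is a genuine gap in Step 2, which you yourself flag as the obstacle. Steps 1 and 3 are sound: via the stopped history and optional stopping they recover the paper's identity $\mathbb{D}(\mathbb{P}\|\mathbb{Q})=D\cdot B(\Pi^\textnormal{A})$, for the reference law $\mathbb{Q}$ under which $\Gamma\sim\pi$ is independent of data generated entirely pre-change. But neither concrete option you give for Step 2 works.

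\emph{The normalization of $w$ is wrong.} With your $\Lambda_k$, the ratio $\mathbb{P}_\pi(\Gamma=k\mid\mathcal{H})\,e^{-\Lambda_k}/\pi_k=p_0(\mathcal{H})/\mathbb{P}_\pi(\mathcal{H})$ does not depend on $k$. A direct computation therefore gives
\begin{equation*}
\mathbb{E}_\pi\Bigl[\frac{T_{\tau^\textnormal{A}}}{\pi_\Gamma}\,e^{-\Lambda_\Gamma}\,w(\Gamma,\mathcal{H})\Bigr]
=\mathbb{E}_{\mathbb{Q}}\Bigl[e^{-d\tau^\textnormal{A}}\sum_{k:\,e^{\Lambda_k}>0}w(k,\mathcal{H})\Bigr].
\end{equation*}
So what you need is $\sum_k w(k,\mathcal{H})\le 1$, not $\sum_k\mathbb{P}_\pi(\Gamma=k\mid\mathcal{H})\,w(k,\mathcal{H})=1$. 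The constant $w\equiv1$ meets your condition and makes the left side infinite.

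\emph{Restricting the prior is harmful.} Any $k$ with positive posterior mass where $w=0$ forces $\mathbb{E}_\pi[\ln w]=-\infty$, which makes the Jensen bound vacuous.

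\emph{The fallback in your last paragraph is vacuous.} Since $\{\tau^\textnormal{A}<k\}$ is determined by pre-change data, $\mathbb{P}\{\tau^\textnormal{A}<\Gamma\}=\mathbb{Q}\{\tau^\textnormal{A}<\Gamma\}=\mathbb{E}_\mathbb{Q}[e^{-d\tau^\textnormal{A}}]$. Data processing through that indicator therefore only yields $\mathbb{D}(\mathbb{P}\|\mathbb{Q})\ge 0$, and no $\mathbb{E}[\tau^\textnormal{A}]$ term can come out of a binary statistic.

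The missing idea is to change measure on the statistic $-d\tau^\textnormal{A}$ itself. This is the Donsker--Varadhan inequality $\mathbb{D}(\mathbb{P}\|\mathbb{Q})\ge\mathbb{E}_\mathbb{P}[-d\tau^\textnormal{A}]-\ln\mathbb{E}_\mathbb{Q}[e^{-d\tau^\textnormal{A}}]$, which is exactly what the paper uses. In your framework it amounts to choosing $w(k,\mathcal{H})=\pi_k$, the full prior with no restriction or renormalization. Then:
\begin{itemize}
\item the integrand becomes $e^{-d\tau^\textnormal{A}-\Lambda_\Gamma}$;
\item its expectation is at most $\mathbb{E}_\mathbb{Q}[e^{-d\tau^\textnormal{A}}]=\textnormal{PFA}(\Pi^\textnormal{A})\le\alpha$;
\item Jensen gives $\ln\alpha\ge-d\,\mathbb{E}_\pi[\tau^\textnormal{A}]-\mathbb{E}_\pi[\Lambda_\Gamma]$, with no entropy remainder.
\end{itemize}
Combining this with your Step 3 and with $\mathbb{E}_\pi[\tau^\textnormal{A}]\le\mathbb{E}_\pi[\Gamma]+\textnormal{ADD}(\Pi^\textnormal{A})=1/\rho+\textnormal{ADD}(\Pi^\textnormal{A})$ yields the lemma. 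With this choice your argument coincides with the paper's proof.
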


\begin{proof}
If $\textnormal{ADD}(\Pi^\textnormal{A})=\infty$, the claim is trivial. Hence assume
$\textnormal{ADD}(\Pi^\textnormal{A})<\infty$. Since
\[
\tau^\textnormal{A} \leq \Gamma + (\tau^\textnormal{A}-\Gamma)^+,
\]
we have
\[
\mathbb{E}_\pi[\tau^\textnormal{A}]\leq \mathbb{E}_\pi[\Gamma]+\textnormal{ADD}(\Pi^\textnormal{A})
= \frac1\rho+\textnormal{ADD}(\Pi^\textnormal{A})<\infty.
\]
In particular, we have $\tau^\textnormal{A}<\infty$ $\mathbb{P}$-almost surely.

We analyze the relative entropy from the true joint probability measure $\mathbb{P}\triangleq\mathbb{P}_\pi$ to a pure-noise reference measure $\mathbb{Q}$. Under $\mathbb{P}$, the changepoint $\Gamma \sim \text{Geo}(\rho)$, and the observations switch channel states at $\Gamma$. Under $\mathbb{Q}$, we maintain $\Gamma \sim \text{Geo}(\rho)$, but the observations are drawn from the pre-change distribution for all $t \ge 1$, regardless of $\Gamma$. Under both measures, the causal policy $\Pi^\textnormal{A}$ operates identically given the observation history. Explicitly, for any $N\in\mathbb{N}$, the joint PMF of $(\Gamma,X^N,Y^N)$ under $\mathbb{P}$ is
\begin{align}
    \label{eq:controlled_sensing_P}
    \mathbb{P}_{\Gamma,X^N,Y^N}(k,x^N,y^N)
    =
    \pi_k
    \prod_{t=1}^N 
        \left[
            \left(\beta_t(x^{t-1},y^{t-1})\right)^{x_t}
            \left( 1- \beta_t(x^{t-1},y^{t-1}) \right)^{1-x_t}
            P^{x_t}_{\theta_t(k)}(y_t)
        \right]
\end{align}
for any $(k,x^N,y^N)\in\mathbb{N}\times\mathcal{X}^N\times\mathcal{Y}^N$, where we recall that $\theta_t(k)\triangleq\mathbb{I}\{t>k\}$, that $\beta_t(x^{t-1},y^{t-1})=0$ for all $t\geq\tau^\textnormal{A}$, and we use the convention $0^0\triangleq1$ here. Similarly, the joint PMF of $(\Gamma,X^N,Y^N)$ under $\mathbb{Q}$ is
\begin{align}
    \label{eq:controlled_sensing_Q}
    \mathbb{Q}_{\Gamma,X^N,Y^N}(k,x^N,y^N)
    =
    \pi_k
    \prod_{t=1}^N 
        \left[
            \left(\beta_t(x^{t-1},y^{t-1})\right)^{x_t}
            \left( 1- \beta_t(x^{t-1},y^{t-1}) \right)^{1-x_t}
            P^{x_t}_0(y_t)
        \right]
,
\end{align}
for any $(k,x^N,y^N)\in\mathbb{N}\times\mathcal{X}^N\times\mathcal{Y}^N$. Note that the information structure here is that of controlled sensing~\cite{Veeravalli_2024}.

First, we evaluate the false alarm probability. Fix any changepoint realization $k\in\mathbb{N}$. Because $\tau^{\textnormal{A}}$ is a stopping time adapted to Alice's natural filtration $\mathcal{F}_t$, the decision to stop at $\tau^{\textnormal{A}} < k$ depends on Alice's observations at most up to $k$, which are drawn from the pre-change distributions under both $\mathbb{P}_k$ and $\mathbb{Q}$, we have $\mathbb{P}_k(\tau^{\textnormal{A}} < k) = \mathbb{Q}(\tau^{\textnormal{A}} < k)$. Hence,
\begin{align}
    \mathbb{P}(\tau^{\textnormal{A}} < \Gamma) 
    &\triangleq
    \sum_{k\in\mathbb{N}} \pi_k \mathbb{P}_k(\tau^{\textnormal{A}} < k)
    \\&= 
    \sum_{k\in\mathbb{N}} \pi_k \mathbb{Q}(\tau^{\textnormal{A}} < k)
    \\
    &= 
    \sum_{k\in\mathbb{N}} \pi_k \mathbb{E}_{\mathbb{Q}}
    \left[ \mathbb{I}(\tau^{\textnormal{A}} < k) \right]
    \\&\overset{\textnormal{(a)}}{=}
    \mathbb{E}_{\mathbb{Q}}
    \left[
    \sum_{k\in\mathbb{N}} \pi_k \mathbb{I}(\tau^{\textnormal{A}} < k)
    \right]
    \\&= 
    \mathbb{E}_{\mathbb{Q}} \left[ \sum_{k\in\mathbb{N}: k>\tau^{\textnormal{A}}} \pi_k \right]
    \\&\overset{\textnormal{(b)}}{=} 
    \mathbb{E}_{\mathbb{Q}} \left[ \exp(-d\tau^\textnormal{A}) \right]
    ,
\end{align}
where (a) follows from Tonelli's theorem, and (b) follows from the definition of geometric prior tail in~\eqref{eq:geo_prior_tail}.
Consequently, the PFA constraint $\textnormal{PFA}(\Pi^\textnormal{A})=\mathbb{P}(\tau^{\textnormal{A}} < \Gamma)\le\alpha$ dictates that
\begin{equation}
    \mathbb{P}(\tau^{\textnormal{A}} < \Gamma) =  \mathbb{E}_{\mathbb{Q}}[\exp(-d\tau^{\textnormal{A}})] \le \alpha
    .
    \label{eq:q_bound}
\end{equation}

Second, we claim that the relative entropy between $\mathbb{P}$ and $\mathbb{Q}$ is exactly the product of the expected post-change unmodulated LLR, $D$, and the expected post-change sensing effort, $B\left(\Pi^\textnormal{A}\right)$, i.e.,
\begin{equation}
    \mathbb{D}(\mathbb{P} \parallel \mathbb{Q}) = D \cdot B\left(\Pi^\textnormal{A}\right).
\end{equation}
Since we assume $D\triangleq\mathbb{D}(P^1_1\|P^1_0)<\infty$, we have $P^1_1\ll P^1_0$. Also, $P^0_1= P^0_0$ from ``no free passive sensing". Hence, we have that $\mathbb{P}\ll\mathbb{Q}$ on every finite-dimensional cylinder set. We first compute the finite-dimensional relative entropy and then pass to the limit, i.e., from~\cite[Theorem~4.11]{polyanskiy2025information}:
\begin{equation}
    \label{eq:relative_entropy_monotone_convergence}
    \mathbb{D}(\mathbb{P} \parallel \mathbb{Q})
    =
    \lim_{N\to\infty} \mathbb{D}(\mathbb{P}_{\mathcal{G}_{N+1}} \parallel \mathbb{Q}_{\mathcal{G}_{N+1}})
    ,
\end{equation}
where for all $t\in\mathbb{N}$ we define the augmented $\sigma$-algebra $\mathcal{G}_t \triangleq \mathcal{F}_t \vee \sigma(\Gamma)$, representing the history of Alice's causal information joined with oracle knowledge of the changepoint, which can be thought of as the information state of a hypothetical ``Genie". Recall $\mathcal{F}_t\triangleq\sigma(X^{t-1},Y^{t-1})$, so $\mathcal{G}_t=\sigma(\Gamma,X^{t-1},Y^{t-1})$.

For any $N\in\mathbb{N}$, from~\eqref{eq:controlled_sensing_P} and~\eqref{eq:controlled_sensing_Q}, the log-likelihood ratio function from $\mathbb{P}$ to $\mathbb{Q}$ of the joint r.v.s $(\Gamma,X^N,Y^N)$ evaluated at $(k,x^N,y^N)$ is
\begin{align}
    \ln \frac{\mathbb{P}_{\Gamma,X^N,Y^N}(k,x^N,y^N)}{\mathbb{Q}_{\Gamma,X^N,Y^N}(k,x^N,y^N)}
    &=
    \sum_{t=1}^{N} \ln \frac{P^{x_t}_{\theta_t(k)}(y_t)}{P^{x_t}_{0}(y_t)}
    \\
    &\overset{\textnormal{(a)}}{=}
    \sum_{t=k+1}^{N} \ln \frac{P^{x_t}_{1}(y_t)}{P^{x_t}_{0}(y_t)}
    \\
    &\overset{\textnormal{(b)}}{=}
    \sum_{t=k+1}^{N} x_t \ell(y_t)
    \\
    &=
    \sum_{t=1}^{N} x_t \ell(y_t) \mathbb{I}\{k<t\} 
    ,
\end{align}
where (a) follows from $\theta_t(k)=\mathbb{I}\{t>k\}$, (b) follows from the definition of LLRs and the ``no free passive sensing" assumption. Then, the log-likelihood ratio r.v. from $\mathbb{P}_{\Gamma,X^N,Y^N}$ to $\mathbb{Q}_{\Gamma,X^N,Y^N}$ is
\begin{align}
    \ln \frac{\mathbb{P}_{\Gamma,X^N,Y^N}}{\mathbb{Q}_{\Gamma,X^N,Y^N}}(\Gamma,X^N,Y^N)
    =
    \sum_{t=1}^{N} X_t \ell(Y_t) \mathbb{I}\{\Gamma<t\} 
     .
\end{align} 
Hence,
\begin{align}
    \mathbb{D}(\mathbb{P}_{\mathcal{G}_{N+1}} \parallel \mathbb{Q}_{\mathcal{G}_{N+1}})
    &=
    \mathbb{E}_{\mathbb{P}} \left[ \sum_{t=1}^{N} X_t \ell(Y_t) \mathbb{I}\{\Gamma<t\}  \right]
    \\
    &\overset{\textnormal{(a)}}{=}
    \sum_{t=1}^{N} 
    \mathbb{E}_{\mathbb{P}} \left[ X_t \ell(Y_t) \mathbb{I}\{\Gamma<t\}  \right]
    \\
    &\overset{\textnormal{(b)}}{=}
    \sum_{t=1}^{N} 
    \mathbb{E}_{\mathbb{P}} \left[ 
        \mathbb{E}_{\mathbb{P}} \left[
            X_t \ell(Y_t) \mathbb{I}\{\Gamma<t\}
            \Big\vert \mathcal{G}_t \right]
        \right]
\end{align}
where (a) is by linearity of expectation and (b) is by the tower rule.

Since $\beta_t=\beta_t(X^{t-1},Y^{t-1})$ is
$\mathcal{F}_t$-measurable and $\mathcal{F}_t\subset\mathcal{G}_t$,
it is also $\mathcal{G}_t$-measurable. Moreover,
$\mathbb{I}\{\Gamma<t\}$ is $\mathcal{G}_t$-measurable. Hence,
\begin{align}
&\mathbb{E}_{\mathbb{P}}
\left[
X_t\ell(Y_t)\mathbb{I}\{\Gamma<t\}
\Bigm| \mathcal{G}_t
\right]
\nonumber\\
&\qquad =
\mathbb{I}\{\Gamma<t\}
\mathbb{P}\{X_t=1\mid \mathcal{G}_t\}
\mathbb{E}_{\mathbb{P}}
\left[
\ell(Y_t)\mid X_t=1,\mathcal{G}_t,\Gamma<t
\right]
\\
&\qquad =
\mathbb{I}\{\Gamma<t\}\beta_t D.
\end{align}
The last equality follows because, on $\{\Gamma<t\}$, the channel is
post-change at time $t$, and under active probing $Y_t\sim P_1^1$,
so that
\begin{equation}
\mathbb{E}_{\mathbb{P}}
\left[
\ell(Y_t)\mid X_t=1,\mathcal{G}_t,\Gamma<t
\right]
=
\mathbb{D}(P_1^1\|P_1^0)=D.
\end{equation}
Therefore, 
\begin{align}
\mathbb{D}(\mathbb{P}_{\mathcal{G}_{N+1}} \parallel \mathbb{Q}_{\mathcal{G}_{N+1}})
&=
D\,
\mathbb{E}_{\mathbb{P}}
\left[
\sum_{t=1}^{N}
\beta_t\mathbb{I}\{\Gamma<t\}
\right].
\end{align}
If we adopt the convention that $\beta_t=0$ after stopping, equivalently
insert $\mathbb{I}\{t\le \tau^\textnormal{A}\}$ in the summand, this becomes
\begin{align}
\mathbb{D}(\mathbb{P}_{\mathcal{G}_{N+1}} \parallel \mathbb{Q}_{\mathcal{G}_{N+1}})
&=
D\,
\mathbb{E}_{\mathbb{P}}
\left[
\sum_{t=1}^{N}
\beta_t\mathbb{I}\{\Gamma<t\}\mathbb{I}\{t\le\tau^\textnormal{A}\}
\right]
\\&=
D\,
\mathbb{E}_{\mathbb{P}}
\left[
\sum_{t=\Gamma+1}^{\min(\tau^\textnormal{A},N)}
\beta_t
\right].
\label{eq:post_change_llr_sum}
\end{align}
Finally, let $N\to\infty$. Since $\tau^\textnormal{A}<\infty$
$\mathbb{P}$-a.s. and the summands in~\eqref{eq:post_change_llr_sum} are nonnegative, the monotone convergence theorem gives
\begin{align}
    \lim_{N\to\infty}
    \mathbb{E}_{\mathbb{P}}
    \left[
    \sum_{t=\Gamma+1}^{\min(\tau^\textnormal{A},N)}
    \beta_t
    \right]
    &=
    \mathbb{E}_{\mathbb{P}}
    \left[
    \sum_{t=\Gamma+1}^{\tau^\textnormal{A}}
    \beta_t
    \right]
    \\
    &\triangleq \label{eq:sum_beta_monotone_convergence}
    B(\Pi^\textnormal{A}).
\end{align}
Combining~\eqref{eq:relative_entropy_monotone_convergence},~\eqref{eq:post_change_llr_sum} and~\eqref{eq:sum_beta_monotone_convergence} yields the desired claim.

Third, we apply the Donsker-Varadhan variational formula~\cite[Thm 4.6]{polyanskiy2025information}, which ensures that for any function $f(X)\in\mathcal{C}_\mathbb{Q}\triangleq\{f:\mathcal{X}\to\mathbb{R}\cup\{-\infty\}:0<\mathbb{E}_{\mathbb{Q}}[\exp{f(X)}]<\infty\}$,
one has
\begin{equation}
    \mathbb{D}(\mathbb{P} \parallel \mathbb{Q})
    =
    \sup_{f\in\mathcal{C}_\mathbb{Q}}
    \mathbb{E}_{\mathbb{P}}[f(X)] - \ln \mathbb{E}_{\mathbb{Q}}[\exp{f(X)}]
    .
\end{equation} 
Choose $f(X) = -d \cdot \tau^{\textnormal{A}} < 0$, so $\mathbb{E}_{\mathbb{Q}}[e^{-d\tau^{\textnormal{A}}}]<1$. On the other hand, since $\mathbb{P}\ll\mathbb{Q}$ and $\mathbb{P}\{\tau^\textnormal{A}<\infty\}=1$, we have
\begin{equation}
    \mathbb{E}_{\mathbb{Q}}[e^{-d\tau^{\textnormal{A}}}] > 0
    ,
\end{equation}
since otherwise $\mathbb{Q}\{\tau^\textnormal{A}<\infty\}=0$, implying $\mathbb{P}\{\tau^\textnormal{A}<\infty\}=0$, a contradiction. 
Hence, $f\in\mathcal{C}_\mathbb{Q}$. Thus, we have
\begin{equation}
    D \cdot B\left(\Pi^\textnormal{A}\right)
    =
    \mathbb{D}(\mathbb{P} \parallel \mathbb{Q})
    \geq
    \mathbb{E}_\mathbb{P}[-d \cdot \tau^{\textnormal{A}}] - \ln \mathbb{E}_{\mathbb{Q}}[e^{-d\cdot \tau^{\textnormal{A}}}]
    .
\end{equation}
Rearranging, we have
\begin{equation}
    \mathbb{E}_\mathbb{P}[-d \cdot \tau^{\textnormal{A}}]
    \leq
    D \cdot B\left(\Pi^\textnormal{A}\right)
    +
    \ln \mathbb{E}_{\mathbb{Q}}[e^{-d\cdot \tau^{\textnormal{A}}}]
    .
\end{equation}
Substituting the upper bound from \eqref{eq:q_bound} yields:
\begin{equation}
    -d \cdot
    \mathbb{E}_\mathbb{P}[\tau^{\textnormal{A}}] \le D \cdot B\left(\Pi^\textnormal{A}\right) + \ln(\alpha) = D \cdot B\left(\Pi^\textnormal{A}\right) - |\ln \alpha| 
\end{equation}
Rearranging gives $d \cdot \mathbb{E}_\mathbb{P}[\tau^{\textnormal{A}}] + D \cdot B\left(\Pi^\textnormal{A}\right) \ge |\ln \alpha| $. Decomposing $\mathbb{E}_\mathbb{P}[\tau^{\textnormal{A}}] = \mathbb{E}_\mathbb{P}[\min(\tau^{\textnormal{A}}, \Gamma)] + \textnormal{ADD}(\Pi^\textnormal{A})$, and applying the bound $\mathbb{E}_\mathbb{P}[\min(\tau^{\textnormal{A}}, \Gamma)] \le \mathbb{E}_\mathbb{P}[\Gamma] = 1/\rho$, we obtain:
\begin{equation}
    d \cdot \textnormal{ADD}(\Pi^\textnormal{A}) + \frac{d}{\rho} + D \cdot B\left(\Pi^\textnormal{A}\right) \ge |\ln \alpha|
\end{equation}
as desired.
\end{proof}

\begin{lemma}[Covertness Bottleneck]
\label{lem:covert_bottle}
Fix any $\delta>0$. For any causal policy $\Pi^\textnormal{A}$ satisfying $\mathrm{ECB}(\Pi^\textnormal{A}) \le \delta$, the expected post-change sensing effort $B\left(\Pi^\textnormal{A}\right)$ is upper bounded as
\begin{equation}
    B\left(\Pi^\textnormal{A}\right) \le \sqrt{\frac{\delta}{\chi_{2,1}} \textnormal{ADD}\left(\Pi^\textnormal{A}\right)}
\end{equation}
\end{lemma}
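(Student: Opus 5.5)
The plan is to apply Cauchy--Schwarz to the post-change portion of the sum defining $B(\Pi^\textnormal{A})$, and to lower bound the ECB by its post-change terms only.

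\textbf{Step 1: lower bound the ECB.} Since every summand in~\eqref{eq:ecb_def} is nonnegative, we may drop the pre-change terms $i\le\Gamma$. For $i>\Gamma$ the divergence equals $\chi_{2,1}=\chi_2(Q^1_1\|Q^0_1)>0$. This gives
\begin{align}
\delta\ \ge\ \textnormal{ECB}(\Pi^\textnormal{A})\ \ge\ \chi_{2,1}\,\mathbb{E}_\pi\left[\sum_{t=\Gamma+1}^{\tau^\textnormal{A}}\beta_t^2\right].
\end{align}

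\textbf{Step 2: apply Cauchy--Schwarz on the product space.} Consider the product space of the sample space and the time index $t$, with counting measure on $t$. Write
\begin{align}
B(\Pi^\textnormal{A})=\mathbb{E}_\pi\left[\sum_{t\in\mathbb{N}}\mathbb{I}\{\Gamma<t\le\tau^\textnormal{A}\}\,\beta_t\right].
\end{align}
Cauchy--Schwarz for the product measure, applied to the functions $\mathbb{I}\{\Gamma<t\le\tau^\textnormal{A}\}$ and $\mathbb{I}\{\Gamma<t\le\tau^\textnormal{A}\}\beta_t$, then yields
\begin{align}
B(\Pi^\textnormal{A})\le\sqrt{\mathbb{E}_\pi\left[\sum_{t=\Gamma+1}^{\tau^\textnormal{A}}1\right]}\cdot\sqrt{\mathbb{E}_\pi\left[\sum_{t=\Gamma+1}^{\tau^\textnormal{A}}\beta_t^2\right]}.
\end{align}
Equivalently, one can apply Cauchy--Schwarz pathwise to the inner sum and then again to the outer expectation, using $\mathbb{E}[\sqrt{UV}]\le\sqrt{\mathbb{E}U\,\mathbb{E}V}$.

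\textbf{Step 3: identify the factors and combine.} The first factor is $\sqrt{\mathbb{E}_\pi[(\tau^\textnormal{A}-\Gamma)^+]}=\sqrt{\textnormal{ADD}(\Pi^\textnormal{A})}$. By Step 1, the second factor is at most $\sqrt{\delta/\chi_{2,1}}$. Multiplying the two gives the claimed bound $B(\Pi^\textnormal{A})\le\sqrt{\delta\,\textnormal{ADD}(\Pi^\textnormal{A})/\chi_{2,1}}$.

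\textbf{Main obstacle.} The main point of care is measure-theoretic, not conceptual. If $\textnormal{ADD}(\Pi^\textnormal{A})=\infty$, the bound is trivial and there is nothing to prove. Otherwise all sums are of nonnegative terms, so Tonelli's theorem justifies treating the random-length sums as integrals over the product space. The ECB's use of $\beta_i(X^{i-1},Y^{i-1})$ matches the $\beta_t$ in $B$, so no conditioning argument is needed.
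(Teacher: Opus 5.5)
Your proposal is correct and follows essentially the same route as the paper: drop the nonnegative pre-change terms of the ECB to get $\mathbb{E}_\pi\bigl[\sum_{t=\Gamma+1}^{\tau^\textnormal{A}}\beta_t^2\bigr]\le\delta/\chi_{2,1}$, then apply Cauchy--Schwarz to $B(\Pi^\textnormal{A})$ against the post-change duration, whose expectation is $\textnormal{ADD}(\Pi^\textnormal{A})$. You justify dropping the pre-change part by nonnegativity alone rather than by the paper's asymptotic-dominance remark, and you handle the $\textnormal{ADD}=\infty$ case via Tonelli; both are slightly cleaner than the paper's wording.
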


\begin{proof}
Notice that the ECB definition in~\eqref{eq:ecb_def} can be written compactly as
\begin{align}
        \textnormal{ECB}
            \left(
                \Pi^\textnormal{A}
            \right)
    &\triangleq
        \mathbb{E}^{\Pi^\textnormal{A}}
        \left[
            \sum_{i=1}^{\tau^\textnormal{A}}
            \chi_2 
            \left( 
                Q^{1}_{ \mathbb{I}\{i> \Gamma\} } 
            \Big\|
                Q^{0}_{ \mathbb{I}\{i> \Gamma\} }
            \right)
            \left(
                \beta_i(X^{i-1},Y^{i-1}) 
            \right)^2
        \right]
    \\&=
        \mathbb{E}^{\Pi^\textnormal{A}}
        \left[
            \sum_{i=1}^{\tau^\textnormal{A}}
            \left(
                \chi_{2,0}
                \mathbb{I}\{i\le \Gamma\}
                +
                \chi_{2,1}
                \mathbb{I}\{i>\Gamma\}
            \right)
                \beta_i^2
        \right]
    \\&=
    \chi_{2,0}
        \mathbb{E}^{\Pi^\textnormal{A}}
        \left[
            \sum_{i=1}^{\min(\tau^\textnormal{A},\Gamma)}
            \beta_i^2
        \right]
    +
    \chi_{2,1}
        \mathbb{E}^{\Pi^\textnormal{A}}
        \left[
            \sum_{i=\Gamma+1}^{\tau^\textnormal{A}}
            \beta_i^2
        \right]
    .
\end{align}
Hence, for any causal policy $\Pi^\textnormal{A}$ with $\mathrm{ECB}(\Pi^\textnormal{A}) \le \delta$, dropping the pre-change part of ECB (since we know as $\alpha\to0^+$, the post-change part dominates as the policy takes sufficiently longer time in the post-change regime to meet the strict PFA requirement) yields, recalling that $\chi_{2,1}>0$ by assumption, the bound
\begin{align}
    \mathbb{E}^{\Pi^\textnormal{A}}
        \left[
            \sum_{i=\Gamma+1}^{\tau^\textnormal{A}}
            \beta_i^2
        \right]
    \leq
    \frac{\delta
        }{\chi_{2,1}}
    .
\end{align}
On the other hand, from Cauchy-Schwarz inequality,
\begin{align}
    B\left(\Pi^\textnormal{A}\right)
        =
    \mathbb{E}^{\Pi^\textnormal{A}}
        \left[
            \sum_{i=\Gamma+1}^{\tau^\textnormal{A}}
            \beta_i
        \right]
    &\leq
    \sqrt{
    \mathbb{E}^{\Pi^\textnormal{A}}
        \left[
            \sum_{i=\Gamma+1}^{\tau^\textnormal{A}}
            \beta_i^2
        \right]
    \cdot
    \mathbb{E}^{\Pi^\textnormal{A}}
        \left[
            \sum_{i=\Gamma+1}^{\tau^\textnormal{A}}
            1
        \right]
    }
    .
\end{align}
Combining both inequalities yields the desired result. 
\end{proof}

\begin{remark}
    The equality condition above provides a hint for how to allocate the probing probabilities: if possible (e.g., with oracle knowledge of the changepoint), Alice should try to allocate a flat sequence between $\Gamma+1$ and $\tau^\textnormal{A}$. This provides some intuition for the constant-sensing-probability achievability scheme below. 
\end{remark}

Let us provide the following lemma, which is used in proving the second-order asymptotic converse.
\begin{lemma}
    \label{lem:sqrt(1+x)_lower_bound}
    Choose any $r\in~]0,1[~$, and let 
\begin{equation}
     A_r 
     \triangleq
    \frac{1}{8}(1-r)^{-3/2}
    > 0
    .
\end{equation}
Then for all $x\in[-r,r]$,
\begin{equation}
    \sqrt{1+x} 
    \geq 
    1 
    + \frac{x}{2}
    - A_r x^2
    .
\end{equation}
\end{lemma}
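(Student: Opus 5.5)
We will prove the bound by Taylor's theorem with Lagrange remainder applied to $g(x)\triangleq\sqrt{1+x}$ on $]-1,\infty[$. Its first two derivatives are
\begin{align}
g'(x)=\tfrac12(1+x)^{-1/2},\qquad g''(x)=-\tfrac14(1+x)^{-3/2}.
\end{align}
Hence $g(0)=1$ and $g'(0)=1/2$. For any $x\in[-r,r]$ with $x\neq0$, the second-order expansion around $0$ gives some $\xi$ strictly between $0$ and $x$ such that
\begin{align}
\sqrt{1+x}=1+\frac{x}{2}-\frac18(1+\xi)^{-3/2}x^2 .
\end{align}
The case $x=0$ is trivial, since both sides equal $1$.

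It remains to bound the remainder coefficient uniformly. Since $\xi$ lies between $0$ and $x$, we have $\xi\ge -|x|\ge -r>-1$, so $1+\xi\ge 1-r>0$. Because $u\mapsto u^{-3/2}$ is decreasing on $]0,\infty[$, this yields
\begin{align}
\frac18(1+\xi)^{-3/2}\le \frac18(1-r)^{-3/2}=A_r .
\end{align}
Multiplying by $x^2\ge0$ and substituting into the expansion gives $\sqrt{1+x}\ge 1+x/2-A_r x^2$, which is the claim.

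For $\xi\ge0$ (that is, $x>0$) the coefficient is even at most $1/8\le A_r$, so that case is covered by the same monotonicity argument. There is no real obstacle here. The only point needing care is keeping $1+\xi$ bounded away from zero, and this is exactly why the statement restricts to $r<1$.
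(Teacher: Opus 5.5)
Your proof is correct and follows the paper's argument: Taylor's theorem with Lagrange remainder for $\sqrt{1+x}$ around $0$, then bounding the remainder coefficient using $1+\xi\ge 1-r>0$. Your separate handling of $x=0$ and the note on the $x>0$ case are harmless additions.
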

\begin{proof}
    Fix any $r\in~]0,1[~$.
    Let $f(x)=\sqrt{1+x}$ for $x\in[-r,r]$.
    Differentiating, we have
    \begin{align}
        f'(x) = \frac{1}{2\sqrt{1+x}},&&f''(x)=-\frac{1}{4(1+x)^{3/2}}.
    \end{align}
    Applying Taylor theorem with Lagrange remainder around $x=0$, we have that for every $x\in[-r,r]$,
    \begin{align}
        \sqrt{1+x} 
        = 1 + \frac{1}{2}x + \frac{f''(\xi_x)}{2!}x^2
        = 1 + \frac{1}{2}x - \frac{1}{8(1+\xi_x)^{3/2}}x^2
    \end{align}
    for some $\xi_x$ between $0$ and $x$. Then $\xi_x\in[-r,r]$ also, thus $1+\xi_x\ge1-r>0$. Hence,
    \begin{align}
        \frac{1}{8(1+\xi_x)^{3/2}} \le \frac{1}{8(1-r)^{3/2}}
        .
    \end{align}
    As $x^2\ge0$, we obtain that, for any $x\in[-r,r]$,
    \begin{align}
        \sqrt{1+x} 
        \ge
        1 + \frac{1}{2}x - \frac{1}{8(1-r)^{3/2}}x^2
        ,
    \end{align}
    completing the proof.
\end{proof}

\begin{theorem}[Maximum Covert Sensing Gain]
    \label{thm:second-order-converse}
As $\alpha \to 0^+$, the average detection delay of any admissible causal policy $\Pi^\textnormal{A}$ is lower-bounded by:
\begin{equation}
    \textnormal{ADD}\left(\Pi^\textnormal{A}\right) \ge \frac{|\ln \alpha|}{d} - \frac{D \sqrt{\delta}}{d^{3/2} \sqrt{\chi_{2,1}}} \sqrt{|\ln \alpha|} + O(1)
    .
\end{equation}
The subtracted term of order $\sqrt{|\ln \alpha|}$ represents the asymptotic maximum covert sensing gain possible under the budget $\delta>0$.
\end{theorem}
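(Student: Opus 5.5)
We combine Lemma~\ref{lem:info} and Lemma~\ref{lem:covert_bottle} into a single scalar inequality in $a\triangleq\textnormal{ADD}(\Pi^\textnormal{A})$, and then invert it using Lemma~\ref{lem:sqrt(1+x)_lower_bound}. Fix an admissible policy with $\textnormal{PFA}\le\alpha$ and $\textnormal{ECB}\le\delta$, and assume $a<\infty$; otherwise there is nothing to prove. Write $\ell\triangleq|\ln\alpha|$ and $c\triangleq D\sqrt{\delta/\chi_{2,1}}$. Lemma~\ref{lem:info} gives $d\,a+D\,B\ge \ell-d/\rho$, and Lemma~\ref{lem:covert_bottle} gives $B\le\sqrt{\delta a/\chi_{2,1}}$. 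Substituting the second into the first yields
\begin{equation}
d\,a + c\sqrt{a} \ge \ell - \frac{d}{\rho}.
\end{equation}
This is a quadratic inequality in $u\triangleq\sqrt a\ge0$. Since $u\mapsto du^2+cu$ is increasing on $[0,\infty[$, it forces
\begin{equation}
\sqrt a \ge \frac{-c+\sqrt{c^2+4d(\ell-d/\rho)}}{2d}
\end{equation}
whenever $\ell>d/\rho$, which holds for all small $\alpha$.

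Next we square this and expand. Squaring gives
\begin{equation}
a \ge \frac{2c^2+4d(\ell-d/\rho)-2c\sqrt{c^2+4d(\ell-d/\rho)}}{4d^2}.
\end{equation}
The only delicate term is $\sqrt{c^2+4d\ell-4d^2/\rho}$. We need an upper bound on it, because it enters with a minus sign. Factor it as $2\sqrt{d\ell}\sqrt{1+x}$ with $x\triangleq(c^2-4d^2/\rho)/(4d\ell)\to0$. For an upper bound we simply use $\sqrt{1+x}\le 1+x/2$, which is concavity. This gives $\sqrt{c^2+4d\ell-4d^2/\rho}\le 2\sqrt{d\ell}+O(1/\sqrt{\ell})$. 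Lemma~\ref{lem:sqrt(1+x)_lower_bound} supplies the matching lower bound, with $|x|\le r=1/2$ for $\alpha$ small, should we want the expansion to be tight to $O(1)$; only the upper bound is needed for the converse. Substituting back,
\begin{equation}
a \ge \frac{\ell}{d} - \frac{c}{d^{3/2}}\sqrt{\ell} + \frac{c^2}{2d^2} - \frac1\rho - O(\ell^{-1/2}) = \frac{\ell}{d}-\frac{D\sqrt\delta}{d^{3/2}\sqrt{\chi_{2,1}}}\sqrt{\ell}+O(1),
\end{equation}
which is the claim. The $O(1)$ term is uniform over admissible policies, since all constants depend only on $(d,\rho,D,\delta,\chi_{2,1})$.

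The algebra is routine. The step needing the most care is justifying the combination of the two lemmas for every admissible policy, including the monotonicity argument used to invert the quadratic and the restriction to $\alpha$ small enough that $\ell>d/\rho$ and $|x|\le1/2$.

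A subtler point is Lemma~\ref{lem:covert_bottle}. There, dropping the pre-change ECB part is legitimate as an inequality, since that part is nonnegative, so no asymptotic argument is actually required. We would state this explicitly.
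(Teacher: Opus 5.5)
Your proof is correct and shares the paper's architecture: you combine Lemmas~\ref{lem:info} and~\ref{lem:covert_bottle} into $d\,a+c\sqrt{a}\ge \ell-d/\rho$ with $\ell=|\ln\alpha|$, then invert the quadratic in $\sqrt{a}$. Where you differ is the inversion step, which you carry out in a different and somewhat cleaner order.

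The paper lower-bounds the positive root $r_+=(S-c)/(2d)$, with $S=\sqrt{c^2+4d(\ell-d/\rho)}$, directly. Because $S$ enters $r_+$ with a plus sign, this needs a \emph{lower} bound on $\sqrt{1+x}$. That is why the paper proves the second-order Taylor bound of Lemma~\ref{lem:sqrt(1+x)_lower_bound}, restricts $\alpha$ so that $|x|\le r$, and then separately checks that the resulting lower bound is positive before squaring.

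You instead square the exact root first and use $S^2=c^2+4d(\ell-d/\rho)$ to cancel the leading term. Then $S$ survives only through $-2cS$, so the elementary concavity bound $\sqrt{1+x}\le 1+x/2$ suffices. This buys three things:
\begin{itemize}
\item Lemma~\ref{lem:sqrt(1+x)_lower_bound} and the auxiliary thresholds $\alpha_r$ and $\alpha_1$ become unnecessary.
\item The bound is fully explicit and valid for every $\alpha\le e^{-d/\rho}$.
\item You get the exact constant term $c^2/(2d^2)-1/\rho$ in the expansion of $r_+^2$.
\end{itemize}
The paper's displayed constant $K^2/(4d^2)$ omits the contribution $(K^2-4d^2/\rho)/(4d^2)$ of the cross term produced by squaring. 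This is harmless there only because everything is absorbed into $O(1)$.

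You are also right that dropping the pre-change part of the ECB in Lemma~\ref{lem:covert_bottle} needs only nonnegativity of that part. The asymptotic heuristic the paper offers there is not needed.
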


\begin{proof}
Substituting Lemma~\ref{lem:covert_bottle} into Lemma~\ref{lem:info} yields an exact non-asymptotic quadratic inequality in terms of $\sqrt{\textnormal{ADD}(\Pi^\textnormal{A})}$:
\begin{equation}
    d \cdot \textnormal{ADD}(\Pi^\textnormal{A}) + D \sqrt{\frac{\delta}{\chi_{2,1}}} \sqrt{\textnormal{ADD}(\Pi^\textnormal{A})} \ge |\ln \alpha| - C_1
\end{equation}
where $C_1 = \frac{d}{\rho} >0$. 

Letting $x = \sqrt{\textnormal{ADD}(\Pi^\textnormal{A})}\geq 0$ and $K = D \sqrt{\frac{\delta}{\chi_{2,1}}}>0$, we have the equivalent inequality
\begin{equation}
    f(x)
    \triangleq
    d \cdot x^2 + K \cdot x
    - (|\ln \alpha| - C_1)
    \geq 0
    .
\end{equation} 
Since $d>0$, the graph $y=f(x)$ is an upward parabola. The discriminant of $f(x)$ is
\begin{equation}
    \Delta_{f} = K^2 + 4d(|\ln \alpha| - C_1)
    .
\end{equation}
Restricting $\alpha$ such that $0<\alpha<\alpha_0\triangleq\exp(-C_1)<1$ onwards, we have
\begin{align}
    \Delta_f > K^2 
    .
\end{align}
By the quadratic formula, $f(x)$ has two roots $r_{\pm}$ as
\begin{equation}
    r_\pm
    =
    \frac{-K \pm \sqrt{ \Delta_f }}{2d}
    ,
\end{equation}
where $r_+>0$ and $r_-<0$. Since $x\geq0$, the resulting equivalent inequality from $f(x)\geq 0$ is
\begin{equation}
    x 
    \ge 
    r_+
    =
    \frac{-K + \sqrt{K^2 + 4d(|\ln \alpha| - C_1)}}{2d}
\end{equation}
Factoring out $4d|\ln \alpha|$ from the discriminant gives:
\begin{equation}
    \label{eq:sqrt_add_lb_before_Taylor_expand}
    x \ge \frac{-K + 2\sqrt{d|\ln \alpha|} \sqrt{1 + \frac{K^2 - 4d C_1}{4d|\ln \alpha|}}}{2d}
\end{equation}

Fix any $r\in~]0,1[~$.
Choose $\alpha_r>0$ such that for all $0<\alpha<\alpha_r$, we have
\begin{equation}
    \frac{|K^2 - 4d C_1|}{4d|\ln \alpha|}
    \leq r
    .
\end{equation}
Considering only $\alpha$ such that $0<\alpha<\min\{\alpha_0,\alpha_r\}$ onwards, by Lemma~\ref{lem:sqrt(1+x)_lower_bound} we have 
\begin{equation}
    \sqrt{1 + \frac{K^2 - 4d C_1}{4d|\ln \alpha|}}
    \geq
    1
    +
    \frac{K^2 - 4d C_1}{8d|\ln \alpha|}
    -
    A_r
    \left(\frac{K^2 - 4d C_1}{4d}\right)^2
    \frac{1}{|\ln \alpha|^2}
    .
\end{equation}
Applying this into~\eqref{eq:sqrt_add_lb_before_Taylor_expand}, we have:
\begin{align}
x
&\ge
\sqrt{\frac{|\ln\alpha|}{d}}
-\frac{K}{2d}
+
\frac{K^2-4dC_1}{8d^{3/2}\sqrt{|\ln\alpha|}}
-
\frac{A_r(K^2-4dC_1)^2}{16d^{5/2}|\ln\alpha|^{3/2}}.
\end{align}
Collecting terms based on orders of $\sqrt{|\ln\alpha|}$, we have:
\begin{equation}
    \label{eq:sqrt_ADD_LB_after_Taylor}
    x \ge \sqrt{\frac{|\ln \alpha|}{d}} - \frac{K}{2d} + O\left(\frac{1}{\sqrt{|\ln \alpha|}}\right)
    ,
\end{equation}
which goes to $\infty$ as $\alpha\to0^+$. Hence, there is an $\alpha_1>0$ such that the right-hand side of~\eqref{eq:sqrt_ADD_LB_after_Taylor} is strictly positive for all $0<\alpha<\alpha_1$.

Hence, for all $\alpha$ such that $0<\alpha<\min\{\alpha_0,\alpha_r,\alpha_1\}$, we can square both sides of~\eqref{eq:sqrt_ADD_LB_after_Taylor} to obtain:
\begin{equation}
    \textnormal{ADD}(\Pi^\textnormal{A})
    =
    x^2
    \ge \frac{|\ln \alpha|}{d} - \frac{K}{d^{3/2}} \sqrt{|\ln \alpha|} + \frac{K^2}{4d^2} + O(1)
\end{equation}
The term $\frac{K^2}{4d^2}$ is a strictly positive constant independent of $\alpha$, which is absorbed into the $O(1)$ term. Substituting $K = D \sqrt{\frac{\delta}{\chi_{2,1}}}$ reveals the maximum covert sensing gain, concluding the proof.
\end{proof}

\section{Achievability Scheme and Its Second-Order Asymptotics}
\label{sec:achievability}

In this section, we construct an admissible covert quickest change detection policy $\Pi^\textnormal{A} = (\tau^{\textnormal{A}}, \beta_\alpha^*)$ with a constant sensing probability schedule. We analyze its performance using non-asymptotic renewal theory and establish its second-order asymptotics, showing that it matches the second-order converse of $\textnormal{ADD}$ versus $|\ln\alpha|$ in Theorem~\ref{thm:second-order-converse}.

Throughout this section, when we say that the policy has constant sensing probability $\beta$, we mean that Alice uses sensing probability $\beta$ while the procedure is running and sends the innocent symbol after stopping. Equivalently, the effective sensing probability is $\beta$ before the stopping time and $0$ on and after the stopping time. This convention changes sums at the stopping epoch by at most one term and hence does not affect the second-order asymptotics.

\subsection{Preliminaries and the Proposed Policy}

\begin{definition}[The Constant-Sensing-Probability Shiryaev-type Policy]
\label{def:proposed_policy}
Alice employs a constant sensing probability sequence $\beta_t = \beta_\alpha^* \in ~]0, 1]$ for all $t \ge 1$. The exact value of $\beta_\alpha^*$ is defined in Def.~\ref{def:optimal_sensing_rate}. The stopping time $\tau^{\textnormal{A}}$ is the classical Shiryaev rule, defined as the first time the posterior probability of the change exceeds $1-\alpha$:
\begin{equation}
    \tau^{\textnormal{A}} \triangleq \inf \left\{ t \ge 1 : 
    p_t \triangleq \mathbb{P}^{\Pi^\textnormal{A}}(\Gamma < t \mid \mathcal{F}_t) \ge 1 - \alpha \right\}
\end{equation}
where $\mathcal{F}_t$ is the filtration generated by the observations up to time $t$. Equivalently, this is a thresholding rule on the posterior odds $\Lambda_t$:
\begin{equation}
    \tau^{\textnormal{A}} = \inf \left\{ t \ge 1 : \Lambda_t
    \triangleq \frac{p_t}{1-p_t} \ge \frac{1-\alpha}{\alpha} \right\}.
\end{equation}
\end{definition}

A few remarks are in order.
\begin{itemize}
    \item
        The constant-sensing-probability description refers to the policy's sensing phase. Operationally, once the stopping time $\tau^\textnormal{A}$ occurs, Alice declares that the change has occurred and subsequently sends the innocent symbol. Equivalently, in sums involving sensing costs, one may view the effective sensing probability as equal to $\beta_\alpha^*$ before stopping and zero after stopping. This convention is consistent with the ECB definition and does not affect the asymptotic analysis.
    \item
        Because Alice employs a constant sensing probability while the procedure is running, the underlying pre-stopping probing actions may be viewed as independent and identically distributed (i.i.d.) $\mathrm{Ber}(\beta)$. Since the channel $W_{Y,Z|X,\theta}$ is a DMC and hence so is $P_{Y|X,\theta}\triangleq P^X_\theta(Y)$, this implies that the joint distribution of $(X_t,Y_t)$ factors as a product post-change ($\theta=1$), further implying that $L_t=L(X_t,Y_t)$ is i.i.d. post-change, i.e., $t>\Gamma$. Similarly for the pre-change ($\theta=0$) duration. 
    \item 
        Following~\cite[eq.~(2.5)]{tartakovsky_general_2005}, for all $i,n\in\mathbb{N}$ with $i\le n$, define the cumulative modulated LLR by
        \begin{equation}
            \label{eq:cumu_llr}
                V_{[i:n]} \triangleq \sum_{t=i}^n L_t
                ,
        \end{equation}
        which represents the LLR at time $n$ for the hypothesis that change occurred at time $t=i-1$ versus at $t=\infty$ (no change),
        after the index shift $\lambda = \Gamma + 1$ versus ~\cite{tartakovsky_general_2005} induced by our changepoint convention.
        The following recursion is equivalent to the classical  Shiryaev posterior-odds recursion in~\cite[eq.~(2.9)]{tartakovsky_general_2005}.
        For all $n\in\mathbb{N}$, we have
        \begin{equation}
            \label{eq:cumu_LLR_expr}
            \Lambda_n = T_n^{-1} \sum_{m=1}^n \pi_{m} \exp(V_{[m+1:n]})
            .
        \end{equation}
    \item
        Following the standard argument for Shiryaev-type stopping time, namely, a single-thresholding on the Bayesian posterior of change, it can be shown via the same argument as those around~\cite[eq.~(2.4)]{tartakovsky_general_2005} that
            \begin{equation}
                \label{eq:PFA_satisfaction}
                    \textnormal{PFA}\left( \Pi^\textnormal{A} \right)
                    \le 1-(1-\alpha) = \alpha 
                    ,
            \end{equation}
            so Alice's constant-sensing-probability Shiryaev policy $\Pi^\textnormal{A}$ satisfies the PFA constraint, at least for all $\alpha\in~]0,1/2[~$.
\end{itemize}

\subsection{ADD Upper Bound via Lorden's Theorem}

By establishing the i.i.d. nature of the post-change environment, we bypass asymptotic limit arguments and bound the ADD using exact non-asymptotic renewal theory, in particular we use~\cite{lorden_excess_1970}.

\begin{theorem}[Non-Asymptotic ADD Bound]
\label{thm:add_bound}
Fix any $\alpha\in~]0,1/2[~$.
For the constant-sensing-probability Shiryaev-type policy $\Pi^\textnormal{A}$ with $\beta \in ~]0, 1]$, the ADD is bounded from above as:
\begin{equation}
    \textnormal{ADD}\left(\Pi^\textnormal{A}\right)
    \le 
    \frac{|\ln \alpha| - |\ln(1-\alpha)|+ C_{\rho} }{\beta D + d} + \frac{\beta V + 2d\beta D + d^2}{(\beta D + d)^2}
    \triangleq
    \textnormal{ADD}_\textnormal{upper}
    ,
\end{equation}
where $C_{\rho}\triangleq \ln((1-\rho)/\rho)$ is a constant that depends on the prior parameter $\rho$ only.
\end{theorem}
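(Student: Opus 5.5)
The plan is to dominate the Shiryaev statistic from below by a single random walk started at the changepoint, and then apply Lorden's excess-over-the-boundary bound to that walk.

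Fix a changepoint realization $\Gamma=k$ and work on the event $\{\tau^\textnormal{A}>k\}$, since otherwise $(\tau^\textnormal{A}-k)^+=0$. From the representation~\eqref{eq:cumu_LLR_expr}, keep only the $m=k$ term:
\begin{align}
\ln\Lambda_n \ge \ln\pi_k - \ln T_n + V_{[k+1:n]} = \ln\frac{\rho}{1-\rho} + d(n-k) + V_{[k+1:n]},
\end{align}
where I used $\pi_k/T_n=\rho(1-\rho)^{k-1}/(1-\rho)^n$. Write $n=k+j$ and define
\begin{align}
S_j \triangleq \sum_{t=k+1}^{k+j}(L_t+d).
\end{align}
Then $\tau^\textnormal{A}-k\le\sigma_k\triangleq\inf\{j\ge1: S_j\ge b\}$, with
\begin{align}
b = \ln\frac{1-\alpha}{\alpha}+C_\rho = |\ln\alpha|-|\ln(1-\alpha)|+C_\rho.
\end{align}

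Post-change, and before stopping, the sensing probability is constant, so the increments $L_t+d=X_t\ell_t+d$ are i.i.d. Their mean is $\mu=\beta D+d>0$. For the second moment, $X_t^2=X_t$ gives $\mathbb{E}[L_t^2]=\beta V$, so $\mathbb{E}[(L_t+d)^2]=\beta V+2d\beta D+d^2$. A subtlety: after Alice stops, $\beta$ drops to $0$. I would couple the policy with a fictitious process that keeps sensing with probability $\beta$ forever. Up to $\tau^\textnormal{A}$ the two processes coincide, so the domination still holds; moreover $\sigma_k$ is computed on data independent of $\mathcal{F}_{k+1}$.

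Apply Lorden's bound for a random walk with positive mean and finite second moment:
\begin{align}
\mathbb{E}[\sigma_k]\le \frac{b}{\mu}+\frac{\mathbb{E}[(L+d)^2]}{\mu^2}.
\end{align}
This is Wald's identity $\mu\,\mathbb{E}[\sigma_k]=\mathbb{E}[S_{\sigma_k}]=b+\mathbb{E}[S_{\sigma_k}-b]$ combined with Lorden's excess bound $\sup_b\mathbb{E}[S_{\sigma_k}-b]\le\mathbb{E}[(X^+)^2]/\mu\le\mathbb{E}[X^2]/\mu$, where $X=L+d$ denotes a generic increment. Since $b>0$ for $\alpha<1/2$ and small enough $\alpha$ relative to $C_\rho$, no problem arises; if $b\le0$ the bound still holds trivially.

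Finally, $(\tau^\textnormal{A}-k)^+\le\sigma_k$ holds pointwise on the coupled space, and $\mathbb{E}_k[\sigma_k]$ does not depend on $k$. Averaging over $\pi$ therefore gives $\textnormal{ADD}\le \textnormal{ADD}_\textnormal{upper}$.

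The main obstacle is making the coupling and independence rigorous. The pre-change history affects $\Lambda$ only through the dropped terms, which are nonnegative, so the bound is safe, but the delayed-filtration convention must be checked carefully. I would also verify that the index shift in~\eqref{eq:cumu_LLR_expr} makes the sum start at $t=k+1$ exactly.
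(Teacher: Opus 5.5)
Your proposal is correct and follows the paper's proof step for step. Keep only the $m=k$ term of the posterior-odds representation, dominate $(\tau^\textnormal{A}-k)^+$ by the first-passage time over $b_\alpha+C_\rho$ of the walk with i.i.d.\ increments $L_t+d$, then combine Wald's identity with Lorden's overshoot bound and average over the prior. Your coupling with a never-stopping $\textnormal{Ber}(\beta)$ sensing process is a genuine improvement: the paper simply asserts the increments are i.i.d.\ for all $m>k$, even though sensing switches off after stopping.

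One correction to your aside on $b\le0$, a case the paper's proof also overlooks. Lorden's bound is stated for levels $b\ge0$. Negative $b$ does occur in the stated range, namely when $\rho>1/2$ and $\alpha\ge1-\rho$. There your displayed bound on $\mathbb{E}[\sigma_k]$ can fail, because its right-hand side can drop below $1\le\mathbb{E}[\sigma_k]$, so it does not hold "trivially." Close this case differently. The $n=k$ term alone gives $\ln\Lambda_k\ge\ln(\pi_k/T_k)=-C_\rho\ge b_\alpha$, hence $\tau^\textnormal{A}\le k$ for every $k$, so $\textnormal{ADD}=0<\textnormal{ADD}_\textnormal{upper}$.
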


\begin{proof}
    Following the ideas in~\cite[eq.~(3.37)-(3.39)]{tartakovsky_general_2005}, from~\eqref{eq:cumu_LLR_expr} we obtain that, for any $n,k\in\mathbb{N}$ with $n>k$, 
\begin{align}
    \Lambda_n 
    \geq
        T_n^{-1} \pi_k \exp(V_{[k+1:n]}) 
    ,
\end{align}
after dropping all other terms in the sum, all of which are non-negative.
Taking the logarithm, we have
\begin{align}
    \ln \Lambda_n 
        &\ge 
            \ln \frac{\pi_k}{T_n} + V_{[k+1:n]} 
        \\&=
            \ln \frac{\rho(1-\rho)^{k-1}}{(1-\rho)^n} + V_{[k+1:n]} 
        \\&=
            -(n+1-k) \ln(1-\rho) + \ln\rho + V_{[k+1:n]}
        \\&=
            (n-k) d + (d + \ln\rho) + V_{[k+1:n]}
        ,
\end{align}
after explicit calculations. Let the prior-dependent constant $C_{\rho}\triangleq-(d + \ln\rho)=\ln((1-\rho)/\rho)$.

Applying the change of variable $t=n-k$, representing a shift by the changepoint candidate $k$, where the condition $n > k$ becomes $t\ge 1$, we have
\begin{equation}
    \label{eq:log-pos-odds-comp}
    \ln \Lambda_{k+t} + C_{\rho} 
    \ge 
    t d + V_{[k+1:k+t]} 
    = 
    \sum_{m=k+1}^{k+t} ( L_{m} + d )
    \triangleq
    S_{[k+1:k+t]}
    .
\end{equation}

The constant-sensing-probability Shiryaev's stopping time can be written as
\begin{equation}
    \tau^\textnormal{A}
    =
    \inf \left\{ n \ge 1 : \ln \Lambda_{n} + C_{\rho}
   \ge b_\alpha + C_{\rho}
   \right\}
\end{equation}
where $b_\alpha\triangleq \ln((1-\alpha)/\alpha) = |\ln\alpha|-|\ln(1-\alpha)|$. Then on the event $\{\tau^\textnormal{A} > k\}$, we have
\begin{equation}
    \tau^\textnormal{A} - k
    =
    \inf \left\{ t \ge 1 : \ln \Lambda_{t+k} + C_{\rho}
        \ge b_\alpha + C_{\rho}
   \right\}
   .
\end{equation}
Define, for each changepoint candidate $k$, the stopping time
\begin{equation}
    \eta_{\alpha}(k) 
    \triangleq 
    \inf 
    \left\{ t \ge 1 : S_{[k+1:k+t]}
        \ge b_\alpha + C_{\rho}
    \right\}
    .
\end{equation}
Then from~\eqref{eq:log-pos-odds-comp}, we have that
$\ln \Lambda_{k+t} + C_{\rho} 
    \ge 
    S_{[k+1:k+t]}$, and it follows
that on $\{\tau^\textnormal{A} > k\}$,
\begin{equation}
    \tau^\textnormal{A} - k \le \eta_{\alpha}(k)
    .
\end{equation}
Hence, the conditional ADD on changepoint being $k$ is upper bounded as
\begin{equation}
    \label{eq:CADD_UB}
    \mathbb{E}_k \left[
    \left( \tau^\textnormal{A} - k \right)^+
    \right]
    \le
    \mathbb{E}_k \left[
    \eta_{\alpha}(k) \mathbb{I}\{\tau^\textnormal{A} \ge k\}
    \right]
    \le
    \mathbb{E}_k \left[
    \eta_{\alpha}(k)
    \right]
    .
\end{equation}

Define the step variable $U_m \triangleq L_{m} + d$ for $m\in\mathbb{N}^*$. Under the measure $\mathbb{P}_k$, for all $m > k$, $U_m$ is i.i.d. with a strictly positive mean:
\begin{equation}
    \mathbb{E}_k[
        U_m
        ] = 
        \mathbb{E}_k[L_{m}] + d = \beta D + d \ge d > 0
    .
\end{equation}
Let $B_\alpha \triangleq b_\alpha + C_{\rho}$. Define the overshoot $R_{\alpha,k}$ of the i.i.d. random walk
\begin{equation}
    S_{[k+1:k+t]}
    \triangleq
    \sum_{m=k+1}^{k+t} U_m
\end{equation}
over the boundary $B_\alpha$, namely,
\begin{equation}
    R_{\alpha,k} \triangleq S_{[k+1:k+\eta_{\alpha}(k)]} - B_\alpha \ge 0
    .
\end{equation}
Hence,
\begin{align}
    B_\alpha + \mathbb{E}_k \left[ R_{\alpha,k} \right] 
=
    \mathbb{E}_k \left[ S_{[k+1:k+\eta_{\alpha}(k)]} \right]
=   \mathbb{E}_k \left[ \sum_{m=k+1}^{k+\eta_{\alpha}(k)} U_m \right]
=   \mathbb{E}_k[ \eta_\alpha(k) ] \mathbb{E}_k[ U_{k+1} ] 
    ,
\end{align}
where in the last step Wald's Equation~\cite{wald_cumulative_1944} applies because $U_t$ is i.i.d. post-change ($\forall t > k$) with $\mathbb{E}_k[|U_t|]=\mathbb{E}_k[|L_t+d|]\le\mathbb{E}_k[|L_t|]+d=\mathbb{E}_k[X_t|\ell_t|]+d\le\mathbb{E}_k[|\ell_t|]+d<\sqrt{V}+d<\infty$, and $\eta_\alpha(k)$ is a stopping time with $\mathbb{E}_k[\eta_\alpha(k)]<\infty$ as it is the first-hitting time of an i.i.d. random walk with strictly positive drift.
Rearranging, we obtain
\begin{equation}
    \label{eq:expect_eta}
    \mathbb{E}_k[ \eta_\alpha(k) ]
    =
    \frac{ B_\alpha + \mathbb{E}_k \left[ R_{\alpha,k} \right] }{\mathbb{E}_k[ U_{k+1} ] }
    .
\end{equation}

By Lorden's inequality for excess over the boundary~\cite{lorden_excess_1970}, the expected overshoot for an i.i.d. random walk with a positive mean is bounded by the ratio of its (uncentered) second moment to its first moment:
\begin{equation}
    \mathbb{E}_k[ R_{\alpha,k} ] 
    \le \frac{\mathbb{E}_k[U_{k+1}^{2}]}{\mathbb{E}_k[ U_{k+1} ]}
    .
\end{equation}
We note that no extra assumption on the (non-)arithmeticity of the r.v.s $S_k$ are needed.

Substituting this overshoot bound into \eqref{eq:expect_eta} yields:
\begin{equation}
    \label{eq:expect_eta_UB}
    \mathbb{E}_k[ \eta_\alpha(k) ]
    \le
    \frac{ B_\alpha }{\mathbb{E}_k[ U_{k+1} ] }
    +
    \frac{\mathbb{E}_k[ U_{k+1}^2 ]}{ \left( \mathbb{E}_k[ U_{k+1} ] \right)^2}
    .
\end{equation}
Evaluating the uncentered second moment yields:
\begin{equation}
    \mathbb{E}_k[ U_{k+1}^2] = \mathbb{E}_k[( L_{k+1} + d)^2] = \mathbb{E}_k[ L_{k+1}^2] + 2d\mathbb{E}_k[ L_{k+1} ] + d^2 = \beta V + 2 d \beta D + d^2
    .
\end{equation}
Substituting these moments into the fractions in~\eqref{eq:expect_eta_UB}, and averaging~\eqref{eq:CADD_UB} over the prior yields
\begin{equation}
    \textnormal{ADD}\left( \Pi^\textnormal{A} \right)
    =
    \sum_{k\in\mathbb{N}} \pi_k \mathbb{E}_k \left[
    \left( \tau^\textnormal{A} - k \right)^+
    \right]
    \le
    \frac{|\ln \alpha| - |\ln(1-\alpha)| + C_{\rho}}{\beta D + d} + \frac{\beta V + 2d\beta D + d^2}{(\beta D + d)^2}
.
\end{equation}

\end{proof}

\begin{remark}
    We would like to emphasize that this upper bound on $\textnormal{ADD}\left( \Pi^\textnormal{A} \right)$ for constant-sensing-probability Shiryaev-type policies is non-asymptotic in $\alpha$, in the sense that it works for at least all $\alpha\in~]0,1/2[~$. This bound allows us to rigorously upper bound the ECB constraint of these types of policies with a further relaxed upper bound in terms of $\alpha$, enabling the construction of a certain sensing probability $\beta_\alpha^*$ that varies with $\alpha$ to satisfy the ECB for the whole range of $\alpha\in~]0,1/2[~$.  
\end{remark}
\begin{remark}
    It is straightforward to show that the upper bound we obtain in Theorem~\ref{thm:add_bound}, $\textnormal{ADD}_\textnormal{upper}$, is strictly positive at least for every $\alpha\in~]0,1/(1+\rho)[~$, and, in particular, for every $\alpha\in~]0,1/2[~$. To see this, observe that $\textnormal{ADD}_\textnormal{upper}$ can be expressed as
    \begin{align}
        \textnormal{ADD}_\textnormal{upper}
        &=
        \frac{ \ln\frac{1-\alpha}{\alpha} + C_\rho }{\beta D + d} 
        + 
        1
        +
        \frac{\beta V - \beta^2 D^2}{(\beta D + d)^2}
        \\&=
        \frac{ \ln\frac{1-\alpha}{\alpha} + \ln\frac{1-\rho}{\rho} + d }{\beta D + d} 
        + 
        \frac{\beta D}{\beta D+d}
        +
        \frac{\beta V - \beta^2 D^2}{(\beta D + d)^2}
        .
    \end{align}
    Since $V\ge D^2$ by Cauchy-Schwarz and $\beta\in[0,1]$, we have
    \begin{align}
        \beta V - \beta^2 D^2 
        =
        \beta (V - \beta D^2)
        \ge
        \beta (V - D^2)
        \ge
        0
        .
    \end{align}
    Hence, to ensure $ \textnormal{ADD}_\textnormal{upper}>0$, it suffices to ensure that
    \begin{align}
        \ln\frac{1-\alpha}{\alpha} + \ln\frac{1-\rho}{\rho} + d > 0
        .
    \end{align}
    Using $d=-\ln(1-\rho)$, this condition is equivalent to
    \begin{align}
        \ln\frac{1-\alpha}{\alpha} - \ln \rho > 0
        \Leftrightarrow
        \frac{1-\alpha}{\alpha} > \rho
        \Leftrightarrow
        \alpha < \frac{1}{1+\rho}
        .
    \end{align}
    Since the prior parameter $\rho\in~]0,1[~$, in particular for all $\alpha\in~]0,1/2[~$, this condition is ensured, and thus $\textnormal{ADD}_\textnormal{upper}>0$.
\end{remark}

\subsection{Proposed Sensing Rate and Its Second-Order Asymptotics}

To ensure satisfaction of the ECB constraint while permitting a closed-form solution for the sensing rate, we construct a structurally relaxed upper bound for the ADD. 

The non-asymptotic ADD upper bound $\textnormal{ADD}_\textnormal{upper}$ from Theorem~\ref{thm:add_bound} can also be written as
\begin{equation}
    \textnormal{ADD}_\textnormal{upper}
    =
    \frac{ \ln\frac{1-\alpha}{\alpha} - \ln\frac{\rho}{1-\rho} }{\beta D + d} + \frac{\beta V + 2d\beta D + d^2}{(\beta D + d)^2}
    .
\end{equation}
Since $x\mapsto\ln((1-x)/x)$ is strictly decreasing on $]0,1[$, restricting $\alpha\in~]0,1-\rho]$ ensures that the nominator of the first term is non-negative. Because $\beta D \ge 0$, we have
\begin{equation}
    \frac{ \ln\frac{1-\alpha}{\alpha} - \ln\frac{\rho}{1-\rho} }{\beta D + d}
    \le
    \frac{ \ln\frac{1-\alpha}{\alpha} - \ln\frac{\rho}{1-\rho} }{d}
    =
    \frac{|\ln \alpha| - |\ln(1-\alpha)| + C_{\rho}}{d}
    .
\end{equation}

Furthermore, the overshoot term in Theorem \ref{thm:add_bound} can be further upper bounded as follows. We have
\begin{align}
    \frac{\beta V + 2d\beta D + d^2}{(\beta D + d)^2}
    &\overset{\textnormal{(a)}}{\le}
    \frac{\beta V + 2d\beta D + d^2}{d^2}
    \\&=
    \frac{\beta (V + 2dD) + d^2}{d^2}
    \\&\overset{\textnormal{(b)}}{\le}
    \frac{ V + 2dD + d^2}{d^2}
    \triangleq
    M_{\textnormal{over}}
    ,
\end{align}
where (a) is because $\beta D \ge 0$, and (b) is because $\beta\in[0,1]$ and $V+2dD>0$. Therefore, the expected overshoot is universally bounded by the constant $ M_{\textnormal{over}}$.

Correspondingly, the relaxed upper bound for ADD of any constant-sensing-probability Shiryaev-type policy $\Pi^\textnormal{A}$ is:
\begin{equation}
    \textnormal{ADD}\left( \Pi^\textnormal{A} \right)
    \le
    \textnormal{ADD}_\textnormal{relaxed} \triangleq \frac{|\ln \alpha| - |\ln(1-\alpha)| + C_{\rho}}{d} + M_{\textnormal{over}}
    ,
\end{equation}
for all $\alpha\in~]0,\min(1/2,1-\rho)[$.

By decoupling the post-change and pre-change costs for any such policy $\Pi^\textnormal{A}=(\tau^\textnormal{A},\beta)$, the ECB can be upper bounded as
\begin{align}
    \textnormal{ECB}\left( \Pi^\textnormal{A} \right) 
    &\le
    \beta^2 \left( \chi_{2,0} \mathbb{E}[\min(\tau^{\textnormal{A}}, \Gamma)] + \chi_{2,1} \textnormal{ADD}\left( \Pi^\textnormal{A} \right) \right)
    \\
    &\leq
    \beta^2 \left( \chi_{2,0} \mathbb{E}[\Gamma] + \chi_{2,1} \textnormal{ADD}_\textnormal{upper} \right)
    \\
    &\leq\label{eq:ECB_upper_relaxed}
    \beta^2 \left( \frac{ \chi_{2,0}}{\rho} + \chi_{2,1} \textnormal{ADD}_\textnormal{relaxed} \right)
\end{align}
after bounding the pre-change duration by the expected prior $\mathbb{E}[\Gamma] = 1/\rho$ and ADD by $\textnormal{ADD}_\textnormal{relaxed}$, on the range $\alpha\in~]0,\min(1/2,1-\rho)[$. The first inequality is written to be robust to the precise stopping-time indexing convention: after stopping, Alice sends the innocent symbol, so any boundary discrepancy at the stopping epoch can only reduce the effective sensing cost relative to the displayed upper bound.

We can thus enforce the ECB constraint by setting $\beta_\alpha^*$ as follows.
\begin{definition}[Proposed Sensing Rate]
\label{def:optimal_sensing_rate}
We define the unconstrained sensing rate $\tilde{\beta}_\alpha$ by equating the upper bound of the ECB to the total budget $\delta$:
\begin{equation}
    \tilde{\beta}_\alpha \triangleq \sqrt{ \frac{\delta}{ \frac{\chi_{2,0}}{\rho} + \chi_{2,1} \left( \frac{|\ln \alpha| - |\ln(1-\alpha)| + C_{\rho}}{d} + M_{\textnormal{over}} \right) } }
    .
\end{equation}
Because probabilities are bounded, the admissible sensing rate is defined as $\beta_\alpha^* \triangleq \min(1, \tilde{\beta}_\alpha)$.
\end{definition}

\begin{theorem}[Second-Order Asymptotics]
\label{thm:second_order_asymptotics}
For any fixed covertness budget $\delta > 0$, as the false alarm bound $\alpha \to 0^+$, the constant-sensing-probability Shiryaev-type policy $\Pi^\textnormal{A} = (\tau^{\textnormal{A}}, \beta_\alpha^*)$ satisfies $\textnormal{ECB}(\Pi^\textnormal{A}) \le \delta$ for all $\alpha$ small enough, and achieves the following second-order asymptotic expansion for the average detection delay:
\begin{equation}
    \textnormal{ADD}(\Pi^\textnormal{A}) \le \frac{|\ln \alpha|}{d} - \frac{D \sqrt{\delta}}{d^{3/2} \sqrt{\chi_{2,1}}} \sqrt{|\ln \alpha|} + O(1)
    .
\end{equation}
\end{theorem}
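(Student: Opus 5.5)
The proof has two parts: verify that the proposed rate $\beta_\alpha^*$ meets the ECB constraint, then expand the non-asymptotic bound of Theorem~\ref{thm:add_bound} at $\beta=\beta_\alpha^*$ to second order in $\sqrt{|\ln\alpha|}$.

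\textbf{ECB constraint.} Restrict to $\alpha\in~]0,\min(1/2,1-\rho)[$ and to $\alpha$ small enough that $\tilde\beta_\alpha\le 1$. This is possible because the denominator in Definition~\ref{def:optimal_sensing_rate} grows like $\chi_{2,1}|\ln\alpha|/d$, so $\tilde\beta_\alpha\to0$. Then $\beta_\alpha^*=\tilde\beta_\alpha$, and the relaxed bound~\eqref{eq:ECB_upper_relaxed} gives
\[
\textnormal{ECB}(\Pi^\textnormal{A})\le\tilde\beta_\alpha^2\Big(\tfrac{\chi_{2,0}}{\rho}+\chi_{2,1}\textnormal{ADD}_\textnormal{relaxed}\Big)=\delta
\]
by construction. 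Two facts make this valid: the PFA holds by~\eqref{eq:PFA_satisfaction}, and $\textnormal{ADD}_\textnormal{relaxed}$ bounds the ADD on this range of $\alpha$.

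\textbf{Asymptotics of the sensing rate.} Write $\ell\triangleq|\ln\alpha|$. Since $|\ln(1-\alpha)|\to0$, the denominator of $\tilde\beta_\alpha$ equals $\chi_{2,1}\ell/d+O(1)$. Hence
\[
\beta_\alpha^*=\sqrt{\tfrac{\delta d}{\chi_{2,1}\ell}}\,(1+O(1/\ell))=\sqrt{\tfrac{\delta d}{\chi_{2,1}}}\,\ell^{-1/2}+O(\ell^{-3/2}).
\]
A simple Taylor step for $(1+x)^{-1/2}$ suffices here, with no need for the sharper Lemma~\ref{lem:sqrt(1+x)_lower_bound}.

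\textbf{ADD expansion.} Plug $\beta=\beta_\alpha^*$ into Theorem~\ref{thm:add_bound} and handle the two terms separately.
\begin{itemize}
\item The overshoot term is at most $M_{\textnormal{over}}=O(1)$.
\item The main term is $(\ell+O(1))/(d+\beta D)$. Expand $\frac{1}{d+\beta D}=\frac1d-\frac{\beta D}{d^2}+\frac{\beta^2D^2}{d^2(d+\beta D)}$; the last piece is $O(\beta^2)=O(1/\ell)$. Multiplying by $\ell+O(1)$ gives
\[
\frac{\ell}{d}-\frac{D\beta_\alpha^*\ell}{d^2}+O(1).
\]
\item Substituting the expansion of $\beta_\alpha^*$: $\frac{D\ell}{d^2}\sqrt{\frac{\delta d}{\chi_{2,1}\ell}}=\frac{D\sqrt\delta}{d^{3/2}\sqrt{\chi_{2,1}}}\sqrt\ell$. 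The correction contributes $\ell\cdot O(\ell^{-3/2})=o(1)$.
\end{itemize}
Together these give the stated upper bound $\frac{\ell}{d}-\frac{D\sqrt\delta}{d^{3/2}\sqrt{\chi_{2,1}}}\sqrt\ell+O(1)$, which matches Theorem~\ref{thm:second-order-converse}.

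\textbf{Main obstacle.} The delicate step is the bookkeeping that every error term is genuinely $O(1)$ rather than $O(\sqrt\ell)$. Specifically:
\begin{itemize}
\item The $O(1)$ perturbation of the denominator of $\tilde\beta_\alpha$ (from $\chi_{2,0}/\rho$, $C_\rho$, $M_{\textnormal{over}}$) must only affect $\beta$ at order $\ell^{-3/2}$, so that multiplying by $\ell$ leaves $o(1)$.
\item The $C_\rho-|\ln(1-\alpha)|$ numerator shift must be multiplied by $\beta D/d^2=O(\ell^{-1/2})$, which is vanishing.
\end{itemize}
I would make both explicit with constants valid for all $\alpha$ below a threshold.
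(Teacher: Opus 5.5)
Your proposal is correct and follows the paper's proof essentially step for step. You establish ECB satisfaction via \eqref{eq:ECB_upper_relaxed}, derive the expansion $\beta_\alpha^*=\sqrt{\delta d/\chi_{2,1}}\,|\ln\alpha|^{-1/2}+O(|\ln\alpha|^{-3/2})$, and substitute it into Theorem~\ref{thm:add_bound}, absorbing the overshoot and the constant numerator shift into $O(1)$; the only cosmetic differences are your exact identity for $1/(d+\beta D)$ in place of the paper's Taylor step, and your unnecessary restriction to $\tilde\beta_\alpha\le 1$ in the ECB step, which the paper sidesteps by using $\beta_\alpha^*\le\tilde\beta_\alpha$ directly.
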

\begin{proof}
\textbf{Step 1: Constraint Satisfaction and the Active Regime.} 
The satisfaction of PFA constraint follows by the property of Shiryaev-type policies. The ECB constraint is satisfied since $\beta_\alpha^* \triangleq \min(1, \tilde{\beta}_\alpha)\le\tilde{\beta}_\alpha$ and thus by~\eqref{eq:ECB_upper_relaxed}, for all $\alpha\in~]0,\min(1/2,1-\rho)[$ we have
\begin{align}
    \textnormal{ECB} \left( \tau^\textnormal{A} , \beta_\alpha^* \right)
    &\leq
    (\beta_\alpha^*)^2 \left( \frac{ \chi_{2,0}}{\rho} + \chi_{2,1} \textnormal{ADD}_\textnormal{relaxed} \right)
    \\&\leq
    (\tilde{\beta}_\alpha)^2 \left( \frac{ \chi_{2,0}}{\rho} + \chi_{2,1} \textnormal{ADD}_\textnormal{relaxed} \right)
    \triangleq \delta
    .
\end{align}

As $\alpha \to 0^+$, $|\ln \alpha| \to \infty$. Consequently, the denominator of $\tilde{\beta}_\alpha$ grows without bound, implying $\tilde{\beta}_\alpha \to 0^+$. There exists a critical threshold $\alpha_0 > 0$ such that for all $\alpha \in~]0, \alpha_0[~$, $\tilde{\beta}_\alpha < 1$. Thus, at least for this range of $\alpha$, which we call the active regime, we have $\beta_\alpha^* = \tilde{\beta}_\alpha$. 

\textbf{Step 2: Asymptotic Expansion of $\beta_\alpha^*$.}
Let $L_\alpha\triangleq |\ln\alpha|$. In the active regime,
$\beta_\alpha^*=\tilde{\beta}_\alpha$. Since
$|\ln(1-\alpha)|=o(1)$ as $\alpha\to0^+$, we have
\begin{align}
\textnormal{ADD}_\textnormal{relaxed}
&=
\frac{L_\alpha}{d}+O(1).
\end{align}
Consequently,
\begin{align}
\frac{\chi_{2,0}}{\rho}
+
\chi_{2,1}\textnormal{ADD}_\textnormal{relaxed}
&=
\frac{\chi_{2,1}}{d}L_\alpha+O(1)
\\
&=
\frac{\chi_{2,1}}{d}L_\alpha
\left(1+O(L_\alpha^{-1})\right).
\end{align}
Therefore,
\begin{align}
\beta_\alpha^*
&=
\sqrt{
\frac{\delta}{
\frac{\chi_{2,1}}{d}L_\alpha
\left(1+O(L_\alpha^{-1})\right)
}
}
\\
&=
\frac{\sqrt{\delta d}}{\sqrt{\chi_{2,1}}}
L_\alpha^{-1/2}
\left(1+O(L_\alpha^{-1})\right)^{-1/2}
\\
&=
\frac{\sqrt{\delta d}}{\sqrt{\chi_{2,1}}}
L_\alpha^{-1/2}
+
O(L_\alpha^{-3/2}).
\end{align}
In particular,
\begin{equation}
    \beta_\alpha^*=\Theta(L_\alpha^{-1/2}).
\end{equation}

\textbf{Step 3: Evaluating the Non-Asymptotic ADD Bound.}
Substituting $\beta_\alpha^*$ into the non-asymptotic ADD bound in
Theorem~\ref{thm:add_bound}, we have
\begin{align}
\textnormal{ADD}(\Pi^\textnormal{A})
&\le
\frac{
L_\alpha-|\ln(1-\alpha)|+C_\rho
}{
d+D\beta_\alpha^*
}
+
\frac{
\beta_\alpha^* V+2d\beta_\alpha^*D+d^2
}{
(d+D\beta_\alpha^*)^2
}.
\end{align}
Since $|\ln(1-\alpha)|=o(1)$, $C_\rho=O(1)$, and
$d+D\beta_\alpha^*\ge d>0$, the contribution of
$-|\ln(1-\alpha)|+C_\rho$ to the first fraction is $O(1)$.
The second fraction is also $O(1)$ because its numerator is $O(1)$
and its denominator is bounded below by $d^2$.
Hence,
\begin{align}
\textnormal{ADD}(\Pi^\textnormal{A})
&\le
\frac{L_\alpha}{d+D\beta_\alpha^*}
+
O(1).
\end{align}
Factoring out $d$ from the denominator gives
\begin{align}
\frac{L_\alpha}{d+D\beta_\alpha^*}
&=
\frac{L_\alpha}{d}
\left(
1+\frac{D}{d}\beta_\alpha^*
\right)^{-1}.
\end{align}
Since $\beta_\alpha^*\to0$, the Taylor expansion
$(1+x)^{-1}=1-x+O(x^2)$ yields
\begin{align}
\frac{L_\alpha}{d+D\beta_\alpha^*}
&=
\frac{L_\alpha}{d}
-
\frac{D}{d^2}L_\alpha\beta_\alpha^*
+
L_\alpha O((\beta_\alpha^*)^2).
\end{align}
Because $\beta_\alpha^*=\Theta(L_\alpha^{-1/2})$, we have
$L_\alpha O((\beta_\alpha^*)^2)=O(1)$. Therefore,
\begin{align}
\textnormal{ADD}(\Pi^\textnormal{A})
&\le
\frac{L_\alpha}{d}
-
\frac{D}{d^2}L_\alpha\beta_\alpha^*
+
O(1).
\end{align}

\textbf{Step 4: Final Substitution.}
Using the expansion of $\beta_\alpha^*$ from Step 2,
\begin{align}
-\frac{D}{d^2}L_\alpha\beta_\alpha^*
&=
-\frac{D}{d^2}L_\alpha
\left(
\frac{\sqrt{\delta d}}{\sqrt{\chi_{2,1}}}
L_\alpha^{-1/2}
+
O(L_\alpha^{-3/2})
\right)
\\
&=
-\frac{D\sqrt{\delta}}{d^{3/2}\sqrt{\chi_{2,1}}}
\sqrt{L_\alpha}
+
O(L_\alpha^{-1/2}).
\end{align}
Substituting this into the preceding bound and absorbing
$O(L_\alpha^{-1/2})$ into $O(1)$ yields
\begin{align}
\textnormal{ADD}(\Pi^\textnormal{A})
&\le
\frac{|\ln\alpha|}{d}
-
\frac{D\sqrt{\delta}}{d^{3/2}\sqrt{\chi_{2,1}}}
\sqrt{|\ln\alpha|}
+
O(1),
\end{align}
which completes the proof.
\color{black}

\end{proof}

\begin{remark}
Since the second-order asymptotics of the proposed achievability scheme matches that of the converse up to $O(1)$, it is immediate that for all $\delta>0$, as $\alpha\to0^+$, the optimal ADD of any causal policy also scales in terms of $|\ln\alpha|$ as
\begin{align}
    \textnormal{ADD}^\star(\alpha,\delta) = \frac{|\ln \alpha|}{d} - \frac{D \sqrt{\delta}}{d^{3/2} \sqrt{\chi_{2,1}}} \sqrt{|\ln \alpha|} + O(1)
    .
\end{align}
\end{remark}
\begin{remark}
In the asymptotic regime as $\alpha \to 0^+$, intuitively the ECB is overwhelmingly dominated by the post-change detection period. Consequently, the pre-change $\chi_{2,0}$ divergence becomes immaterial to the optimal scaling laws.
\end{remark}

\section{Numerical Results}
\label{sec:numer_resul}

\begin{figure}[!b]
    \centering
    \begin{subfigure}[b]{0.48\textwidth}
        \centering
        \includegraphics[width=\textwidth]{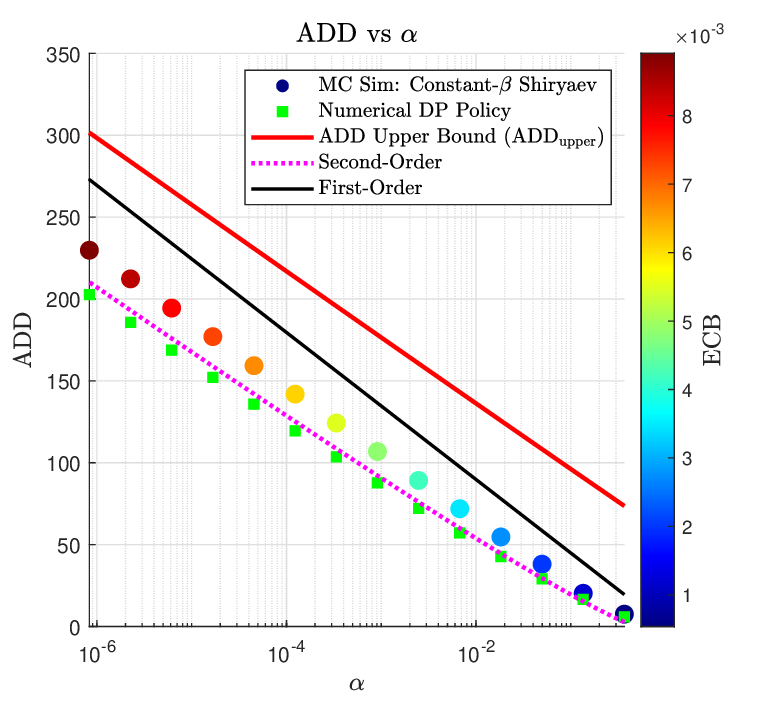}
        \caption{ADD versus $\alpha$ on a logarithmic scale.}
        \label{fig:raw_scale}
    \end{subfigure}
    \hfill
    \begin{subfigure}[b]{0.48\textwidth}
        \centering
        \includegraphics[width=\textwidth]{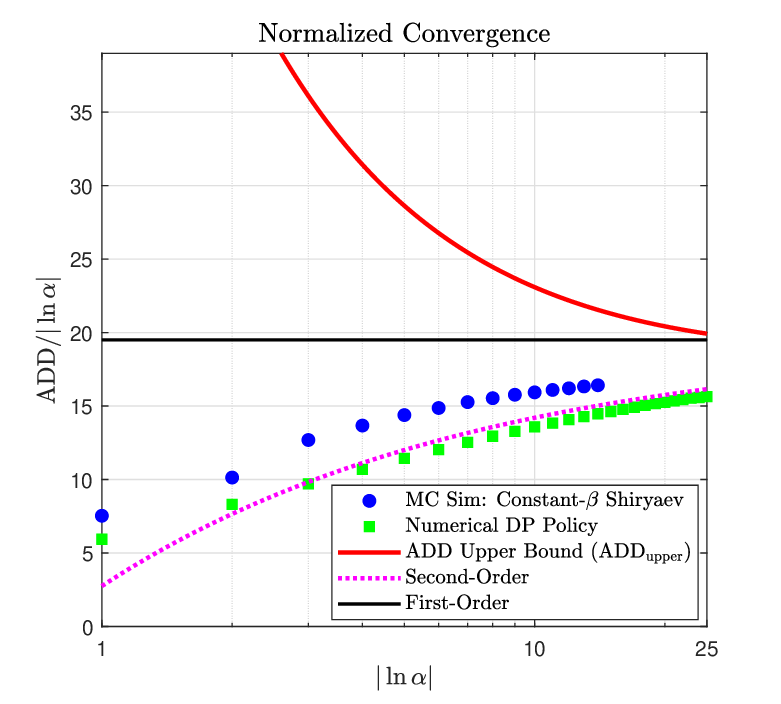}
        \caption{Normalized asymptotic convergence.}
        \label{fig:normalized_convergence}
    \end{subfigure}
    
    \caption{Performance validation of the proposed constant-sensing-probability Shiryaev-type policy $\Pi^\textnormal{A}=(\tau^\textnormal{A},\beta_{\alpha}^*)$. The parameters are $\rho = 1/20$, $\delta = 1/24$, $D\approx 0.8318$, $V\approx 1.9218$, $\chi_{2,1}=16/21$ and $\chi_{2,0}=1/6$. \textbf{(a)} average detection delay (ADD) plotted against the target PFA (probability of false alarm) constraint $\alpha$. \textbf{(b)} The ADD normalized by $|\ln \alpha|$, plotted against $|\ln \alpha|$.}
    \label{fig:asymptotic_validation}
\end{figure}

We now present numerical simulations to corroborate our theoretical findings. 
We use Monte Carlo simulation to estimate the ADD, PFA and ECB metrics of the proposed constant-sensing-probability Shiryaev-type policy $\Pi^\textnormal{A}=(\tau^\textnormal{A},\beta_{\alpha}^*)$ over the range of $\alpha$ values such that $ |\ln\alpha| \in\{1,2,\ldots,14\}$. The smallest $\alpha$ value is $\alpha_\textnormal{min}=\exp(-14)\approx 8.32\times10^{-7}$. We also compute an approximate dynamic programming (DP) policy by formulating Alice's policy design problem as a constrained POMDP~\cite{premkumar_optimal_2008,banerjee_bayesian_2011,geng_bayesian_2014} and applying discretized value iteration.

In Fig.~\ref{fig:asymptotic_validation}, we present the simulation result in a scenario where $\rho=1/20$ and $\delta = 1/24$. For Alice's pre-change (resp. post-change) observation channel, we choose $P^1_0=\textnormal{Ber}(0.2)$ and $P^1_1=\textnormal{Ber}(0.8)$, yielding $D\approx 0.8318$ and $V\approx 1.9218$. For Eve's pre-change (resp. post-change) observation channel under active or innocent probing, we choose $Q^1_0=\textnormal{Ber}(0.6)$ and $Q^0_0=\textnormal{Ber}(0.4)$ (resp. $Q^1_1=\textnormal{Ber}(0.7)$, $Q^0_1=\textnormal{Ber}(0.3)$), resulting in $\chi_{2,1}=16/21$ and $\chi_{2,0}=1/6$. 

Fig.~\ref{fig:raw_scale} plots the ADD against the PFA constraint $\alpha$ (in $\ln_{10}$ scale), and Fig.~\ref{fig:normalized_convergence} plots the normalized ADD, namely, $\textnormal{ADD}/|\ln \alpha|$, against $|\ln\alpha|$ (in $\ln_{10}$ scale). For both figures, we plot the Monte Carlo simulation result, the ADD upper bound via Lorden's theorem ($\textnormal{ADD}_\textnormal{upper}$), the first-order asymptotic of $\textnormal{ADD}\approx |\ln\alpha|/d$ (which represents the innocent policy, or no probing at all), and the second-order asymptotic $\textnormal{ADD}\approx \frac{|\ln \alpha|}{d} - \frac{D \sqrt{\delta}}{d^{3/2} \sqrt{\chi_{2,1}}} \sqrt{|\ln \alpha|}$. 

In Fig.~\ref{fig:raw_scale}, the simulated ECB for each data point is shown via a colorbar to verify that the ECB constraint $\delta$ is satisfied. The simulated ADD values fall below both $\textnormal{ADD}_\textnormal{upper}$ and the first-order asymptotic. Thus, even when $\delta=1/24<\min\{\chi_{2,1},\chi_{2,0}\}=1/6$, so that no deterministic probe can be used, probabilistic probing still improves ADD. The numerical DP policy is also close to the second-order asymptotic.

Fig.~\ref{fig:normalized_convergence} illustrates the convergence of $\textnormal{ADD}_\textnormal{upper}$, the second-order asymptotic, and the numerical DP policy towards the first-order asymptotic as $\alpha\to0^+$. The simulated constant-$\beta$ Shiryaev-type policy also appears to converge towards the asymptotics.

Both figures demonstrate substantial gains from covert probing. Compared to the no-probing baseline, the constant-$\beta$ Shiryaev policy achieves an absolute ADD reduction ranging from roughly $10$ at $\alpha=e^{-1}$ to $50$ at $\alpha=e^{-14}$, corresponding to a relative reduction from about $63\%$ to $16\%$.

Although the theory establishes matching second-order asymptotics as $\alpha\to0^+$, non-asymptotic gains from covert probing already appear at moderate values of $\alpha$, suggesting practical relevance for quickest change detection applications.

\section*{Acknowledgment}
Parts of this document have received assistance from generative AI tools to aid in the composition; the authors have reviewed and edited the content as needed and take full responsibility for it.

\bibliographystyle{IEEEtran}
\bibliography{references}

@Article{Huang_2020,
  author  = {Ke-Wen Huang and Hui-Ming Wang and Don Towsley and H. Vincent Poor},
  journal = {IEEE Transactions on Communications},
  title   = {{LPD} Communication: A Sequential Change-Point Detection Perspective},
  year    = {2020},
  month   = apr,
  volume  = {68},
  number  = {4},
  pages   = {2474--2490},
}

@Article{Huang_2021,
  author  = {Ke-Wen Huang and Hui-Ming Wang and H. Vincent Poor},
  journal = {IEEE Transactions on Information Theory},
  title   = {On Covert Communication Against Sequential Change-Point Detection},
  year    = {2021},
  month   = nov,
  volume  = {67},
  number  = {11},
  pages   = {7285--7303},
}

@InProceedings{Ramtin_2024,
  author    = {A. Ramtin and Z. Hare and L. Kaplan and P. Nain and V. Veeravalli and D. Towsley},
  booktitle = {Proc. of IEEE Military Communications Conference},
  title     = {Quickest Change Detection in the Presence of Covert Adversaries},
  year      = {2024},
  month     = oct,
  pages     = {1--6},
}

@Article{Ramtin_2025,
  author  = {Amir Reza Ramtin and Philippe Nain and Don Towsley},
  journal = {IEEE Signal Processing Letters},
  title   = {Quickest Change Detection in Continuous-Time in Presence of a Covert Adversary},
  year    = {2025},
  volume  = {32},
  pages   = {4299--4303},
}

@Article{Gagatsos2019,  author    = {Gagatsos, Christos N. and Bash, Boulat A. and Datta, Animesh and Zhang, Zheshen and Guha, Saikat},  journal   = {Physical Review A},  title     = {Covert sensing using floodlight illumination},  year      = {2019},  month     = {Jun},  pages     = {062321},  volume    = {99},  issue     = {6},  numpages  = {15},  publisher = {American Physical Society}}

@InProceedings{Tahmasbi2020b,  author       = {Mehrdad Tahmasbi and Matthieu R Bloch},  booktitle    = {Proc. of IEEE International Symposium on Information Theory},  title        = {Active Covert Sensing},  year         = {2020},  address      = {Los Angeles, CA},  month        = jun,  pages        = {840-845}}

@InProceedings{Tahmasbi2021,  author       = {Mehrdad Tahmasbi and Boulat Bash and Saikat Guha and Matthieu R Bloch},  booktitle    = {Proc. of IEEE International Symposium on Information Theory},  title        = {Signaling for Covert Quantum Sensing},  year         = {2021},  month        = jul,  pages        = {1041-1045}}

@InProceedings{Bash2017,  author          = {Boulat A. Bash and Christos N. Gagatsos and Animesh Datta and Saikat Guha},  booktitle       = {Proc. of IEEE International Symposium on Information Theory},  title           = {Fundamental limits of quantum-secure covert optical sensing},  year            = {2017},  address         = {Aachen, Germany},  month           = jun,  pages           = {3210--3214}}

@InProceedings{Goeckel2017,
  author    = {Dennis Goeckel and Boulat A. Bash and Azadeh Sheikholeslami and Saikat Guha and Don Towsley},
  booktitle = {Proc. of Asilomar Conference on Signals, Systems and Computers},
  title     = {Covert Active Sensing of Linear Systems},
  year      = {2017},
  address   = {Pacific Grove, CA},
  month     = nov,
  pages     = {1692--1696},
}

@Article{Tahmasbi2020c,
  author  = {Mehrdad Tahmasbi and Matthieu R. Bloch},
  journal = {IEEE Journal on Selected Areas in Information Theory},
  title   = {On Covert Quantum Sensing and the Benefits of Entanglement},
  year    = {2021},
  month   = mar,
  volume  = {2},
  number  = {1},
  pages   = {352--365},
}

@Article{Hao2022Demonstration,
  author  = {Shuhong Hao and Haowei Shi and Christos N. Gagatsos and Mayank Mishra and Boulat Bash and Ivan Djordjevic and Saikat Guha and Quntao Zhuang and Zheshen Zhang},
  journal = {Physical Review Letters},
  title   = {Demonstration of Entanglement-Enhanced Covert Sensing},
  year    = {2022},
  month   = jun,
  volume  = {129},
  number  = {1},
  pages   = {010501},
}

@InProceedings{Wang2024Covert,  author       = {Shi-Yuan Wang and Meng-Che Chang and Matthieu R. Bloch},  booktitle    = {Proc. of 58th Annual Conference on Information Sciences and Systems},  title        = {Covert Joint Communication and Sensing Under Variational Distance Constraint},  year         = {2024},  address      = {Princeton, NJ},  month        = mar}

@Article{Ker2007,  author       = {Ker, A.D.},  journal      = {IEEE Signal Processing Letters},  title        = {A Capacity Result for Batch Steganography},  year         = {2007},  number       = {8},  pages        = {525--528},  volume       = {14}}

@Article{Bash2013,  author       = {Bash, B.A and Goeckel, D. and Towsley, D.},  journal      = {IEEE Journal on Selected Areas in Communications},  title        = {Limits of Reliable Communication with Low Probability of Detection on {AWGN} Channels},  year         = {2013},  issn         = {0733-8716},  month        = {September},  number       = {9},  pages        = {1921-1930},  volume       = {31}}

@Article{Bloch2015b,  author       = {Matthieu R. Bloch},  journal      = {IEEE Transactions on Information Theory},  title        = {Covert Communication over Noisy Channels: A Resolvability Perspective},  year         = {2016},  issn         = {0018-9448},  month        = may,  number       = {5},  pages        = {2334-2354},  volume       = {62}}

@Article{Wang2016b,  author       = {L. Wang and G. W. Wornell and L. Zheng},  journal      = {IEEE Transactions on Information Theory},  title        = {Fundamental Limits of Communication With Low Probability of Detection},  year         = {2016},  issn         = {0018-9448},  month        = jun,  number       = {6},  pages        = {3493--3503},  volume       = {62}}

@InProceedings{Bai2014,  author       = {Cheng-Zong Bai and Gupta, V.},  booktitle    = {Proc. of American Control Conference},  title        = {On {Kalman} filtering in the presence of a compromised sensor: Fundamental performance bounds},  year         = {2014},  address      = {Portland, OR},  month        = {June},  pages        = {3029--3034}}

@InProceedings{Bai2015,  author       = {Cheng-Zong Bai and Pasqualetti, F. and Gupta, V.},  booktitle    = {Proc. of American Control Conference},  title        = {Security in stochastic control systems: Fundamental limitations and performance bounds},  year         = {2015},  address      = {Chicago, IL},  month        = {July},  pages        = {195--200}}

@Article{Moulin2003,  author       = {Moulin, P. and O'Sullivan, J.A.},  journal      = {IEEE Transactions on Information Theory},  title        = {Information-theoretic analysis of information hiding},  year         = {2003},  number       = {3},  pages        = {563--593},  volume       = {49}}

@Article{Wang2008,  author       = {Ying Wang and Moulin, P.},  journal      = {IEEE Transactions on Information Theory},  title        = {Perfectly Secure Steganography: Capacity, Error Exponents, and Code Constructions},  year         = {2008},  number       = {6},  pages        = {2706--2722},  volume       = {54}}

@article{Veeravalli_2024, title={Quickest Change Detection With Controlled Sensing}, volume={5}, ISSN={2641-8770}, journal={IEEE Journal on Selected Areas in Information Theory}, publisher={Institute of Electrical and Electronics Engineers (IEEE)}, author={Veeravalli, Venugopal V. and Fellouris, Georgios and Moustakides, George V.}, year={2024}, pages={1–11} }

@article{tartakovsky_general_2005,
	title = {General {Asymptotic} {Bayesian} {Theory} of {Quickest} {Change} {Detection}},
	volume = {49},
	number = {3},
	journal = {Theory of Probability \& Its Applications},
	publisher = {Society for Industrial and Applied Mathematics},
	author = {Tartakovsky, A. G. and Veeravalli, V. V.},
	month = jan,
	year = {2005},
	pages = {458--497},
}

@article{wald_cumulative_1944,
	title = {On {Cumulative} {Sums} of {Random} {Variables}},
	volume = {15},
	number = {3},
	journal = {The Annals of Mathematical Statistics},
	publisher = {Institute of Mathematical Statistics},
	author = {Wald, Abraham},
	month = sep,
	year = {1944},
	pages = {283--296},
}

@article{lorden_excess_1970,
	title = {On {Excess} {Over} the {Boundary}},
	volume = {41},
	number = {2},
	journal = {The Annals of Mathematical Statistics},
	publisher = {Institute of Mathematical Statistics},
	author = {Lorden, Gary},
	month = apr,
	year = {1970},
	pages = {520--527},
}

@article{shiryaev1963optimum,
  title={On optimum methods in quickest detection problems},
  author={Shiryaev, Albert N},
  journal={Theory of Probability \& Its Applications},
  volume={8},
  number={1},
  pages={22--46},
  year={1963},
  publisher={SIAM}
}

@inproceedings{banerjee_bayesian_2011,
	title = {Bayesian quickest change detection under energy constraints},
	booktitle = {2011 {Information} {Theory} and {Applications} {Workshop}},
	author = {Banerjee, Taposh and Veeravalli, Venugopal V.},
	month = feb,
	year = {2011},
	pages = {1--10},
}

@inproceedings{premkumar_optimal_2008,
	title = {Optimal {Sleep}-{Wake} {Scheduling} for {Quickest} {Intrusion} {Detection} {Using} {Wireless} {Sensor} {Networks}},
	issn = {0743-166X},
	booktitle = {{IEEE} {INFOCOM} 2008 - {The} 27th {Conference} on {Computer} {Communications}},
	author = {Premkumar, K. and Kumar, A.},
	month = apr,
	year = {2008},
	pages = {1400--1408},
}

@article{xie2021sequential,
  title={Sequential (quickest) change detection: Classical results and new directions},
  author={Xie, Liyan and Zou, Shaofeng and Xie, Yao and Veeravalli, Venugopal V},
  journal={IEEE Journal on Selected Areas in Information Theory},
  volume={2},
  number={2},
  pages={494--514},
  year={2021},
  publisher={IEEE}
}

@book{poor_quickest_2008,
	address = {Cambridge},
	title = {Quickest {Detection}},
	publisher = {Cambridge University Press},
	author = {Poor, H. Vincent and Hadjiliadis, Olympia},
	year = {2008},
}

@book{tartakovsky_sequential_2014,
	address = {New York},
	title = {Sequential {Analysis}: {Hypothesis} {Testing} and {Changepoint} {Detection}},
	publisher = {Chapman and Hall/CRC},
	author = {Tartakovsky, Alexander and Nikiforov, Igor and Basseville, Michele},
	month = aug,
	year = {2014},
}

@book{polyanskiy2025information,
  title={Information theory: From coding to learning},
  author={Polyanskiy, Yury and Wu, Yihong},
  year={2025},
  publisher={Cambridge university press}
}

@article{geng_bayesian_2014,
	title = {Bayesian {Quickest} {Change}-{Point} {Detection} {With} {Sampling} {Right} {Constraints}},
	volume = {60},
	number = {10},
	journal = {IEEE Transactions on Information Theory},
	author = {Geng, Jun and Bayraktar, Erhan and Lai, Lifeng},
	month = oct,
	year = {2014},
	pages = {6474--6490},
}

@inproceedings{chang_covert_2021,
	title = {Covert {Sequential} {Hypothesis} {Testing}},
	booktitle = {2021 {IEEE} {Information} {Theory} {Workshop} ({ITW})},
	author = {Chang, Meng-Che and Bloch, Matthieu R.},
	month = oct,
	year = {2021},
	pages = {1--6},
}

@inproceedings{rao_extensible_2022,
	title = {An {Extensible} {Covert} {Communication} {Scheme} {Over} the {AWGN} {Channel} {With} {Feedback}},
	booktitle = {2022 {IEEE} {Information} {Theory} {Workshop} ({ITW})},
	author = {Rao, Hangmei and Wang, Ligong},
	month = nov,
	year = {2022},
	pages = {368--373},
}

@incollection{veeravalli_quickest_2014,
	title = {Quickest {Change} {Detection}},
	volume = {3},
	booktitle = {Academic {Press} {Library} in {Signal} {Processing}},
	publisher = {Elsevier},
	author = {Veeravalli, Venugopal V. and Banerjee, Taposh},
	year = {2014},
	pages = {209--255},
}

\end{document}